\documentclass[a4paper,12pt]{article}
\usepackage{listings}
\usepackage{anysize}
\marginsize{2cm}{2cm}{2cm}{2cm}
\usepackage{amsmath, amsthm, amssymb, amsfonts} 
\usepackage{mathtools}
\usepackage{graphicx}
\usepackage{lscape}
\usepackage{tikz}
\usepackage{caption,subcaption}
\usetikzlibrary{shapes,arrows}
\usepackage{float}
\usetikzlibrary{decorations.pathmorphing} 
\usetikzlibrary{fit}					
\usetikzlibrary{backgrounds}
\usepackage{setspace}
\restylefloat{table}
\usepackage{epstopdf}
\usepackage{xcolor}
\usepackage{amsmath}
\usepackage{amssymb}
\usepackage{algorithm}
\usepackage{algpseudocode}

\newtheorem{theorem}{Theorem}

\newtheorem{corollary}[theorem]{Corollary}

\newtheorem{lemma}[theorem]{Lemma}

\newtheorem{proposition}[theorem]{Proposition}

\newtheorem{assumption}[theorem]{Assumption}

\makeatother
\usepackage{comment}
\usepackage{chngcntr}
\counterwithin{figure}{section}
\counterwithin{table}{section}
\counterwithin{theorem}{section}
\newcommand{\commentoD}[1]{\textcolor{orange}{[#1]}}
\newcommand{\commentoY}[1]{\textcolor{blue}{[#1]}}

\title{Pricing and Hedging Financial Derivatives in Merger\&Acquisition Deals with Price Impact}
\author{Emilio Barucci\thanks{Department of Mathematics, Politecnico di Milano}\and Yuheng Lan\thanks{Department of Mathematics, Politecnico di Milano} \and Daniele Marazzina\thanks{Department of Mathematics, Politecnico di Milano}
}
\begin{document}
\maketitle
\begin{abstract}
We investigate the optimal execution of contracts that are used in merger\&acquisition deals. We consider cash-settled and physically delivered contracts between a broker and a counterpart. Contracts are linear (total returns swaps), nonlinear (collar contracts) or Asian type (TWAP based contracts). We derive the optimal execution strategy and the optimal fee through indifference utility arguments allowing for linear market effects of trades. We show that linear cash-settled contracts are more expensive and more exposed to manipulation/statistical arbitrages by the broker. Also nonlinear and Asian type contracts are exposed to these phenomena. 
    \end{abstract}
{\bf Keywords:} Total Return Swap, Cash settlement, Illiquidity, Manipulation, Phyisical delivery, Statistical arbitrage.

\newpage
\section{Introduction}
Total Return Swaps (TRS) are financial contracts widely used in merger\&acquisition deals. 
Instead of directly acquiring stocks on the market bearing the market impact loss, the acquiring company enters a financial derivative with a broker asking for the delivery of the stocks of the target company at a certain date. 

As a pure illustrative example, we may refer to the Unicredit-Commerzbank experience, see \cite{STOR}. 
In 2024, Unicredit entered into TRS with Barclays and Bank of America, covering 5\% and 6.53\% stakes of Commerzbank. The goal was to pool a large share of stocks while meeting Eurozone regulatory requirements that prohibit acquiring more than 10\% of a bank without the approval by the European Central Bank. Additionally, through call and put options, Unicredit sets price limits on the shares covered by these contracts (collar-type contracts). This strategy allowed Unicredit to increase its direct and indirect ownership in Commerzbank from 9.2\% to 20.73\% conditionally on approval from the regulatory authority, making it (potentially) the bank's largest shareholder surpassing even the German government.

These agreements present several advantages for both parties. The broker is allowed to tame the trades at its best convenience.
Also the acquiring company (counterpart in what follows) gets flexibility to cope with regulatory approval (banking or competition authorities) or change of conditions. 

In this paper, we investigate the pricing and hedging of these contracts by a broker in a model with temporary and permanent market impact. The broker at the same time sets a fee for the contract through utility indifference pricing and trades the underlying asset hedging the position; we consider both cash-settled and physically delivered contracts. 
Physical delivery means that the broker has to deliver the stocks at a certain date, cash settlement means that at maturity there is only the exchange of cash flows and that the position is fully unwind at that time. 

The paper contributes to several strands of literature. 

First of all, we refer to the literature on pricing and hedging illiquid European/American derivatives, see \cite{BA-BA,BANK,LI-YO,KR-KU,BOUCH,BOUCH2,GUE17,ROCH,NAU,ROG,EKR,BERCH}. Most of these papers deal with cash-settled contracts, only some of them consider physical delivery.
These papers concentrate either on super replication/quadratic hedging of derivatives or on pricing through indifference utility. 
In our analysis, we extend the indifference utility approach of \cite{GUE17} considering linear, collar type and TWAP-based (Time-Weighted Average Price) contracts that are widely used in this type of deals, see \cite{BALD,BAN,BIC,LARS}. We confirm their result that cash-settled contracts are more expensive than physical delivery ones, the rationale being that in the first case the broker has to liquidate the underlying position at the terminal date.

We also contribute to the literature on market manipulation and statistical arbitrage in market models with price impact. \cite{HUBER,GATH0,ASS,GATH,MUL} have shown that market impact models with martingale dynamics allow for price manipulation and statistical arbitrage strategies, i.e., round trip trades with strictly positive expected payoffs. These papers provide conditions on the temporary and permanent price effect of a trade avoiding manipulation; instead, our analysis deals with manipulation in a market with permanent/temporary price impact by a trader hedging a long position on a TRS/collar-type contract. 
The literature on manipulation by a trader holding a long/short position on a derivative includes \cite{JARR,HO-NA,NY-PA,ROCH}. In our analysis,  manipulation concerns the sign of trades (the broker doesn't trade consistently on the same side of the market) and arbitrage deals with the possibility of implementing a strategy with a positive expected payoff. 

Finally, the paper is related to the literature on the manipulation of the underlying asset by a trader holding/trading a future, a contract which is quite similar to a TRS. 
Manipulation can be based on information or not, see \cite{KY-VI}; in the first case the trader exploits the presence of private information in the market to build a profitable manipulation strategy; in the latter the trader exploits market illiquidity to implement a wide spectrum of profitable strategies such as punch the close/settlement price with a positive payoff at maturity, pump and dump strategies, squeezes or corners. It is argued that cash settlement should be less subject to manipulation than physical delivery, see \cite{KYLE2,PIRRO}, but there are no definitive results on this issue, there are papers showing that manipulation may occur both with cash settlement and physical delivery, see \cite{KUMAR,HO-NA}. We show that cash-settled contracts are more exposed to manipulation because the broker at the same time hedges the stock exposure at the beginning of the trading period and then unwinds the position at the settlement date , while a physical delivery agreement aligns hedging and holdings. A nonlinear payoff, such as a collar contract, may render a much more complex trading strategy dependent on the underlying asset evolution. The rationale of these results is that the broker's activity is motivated by two main motivations: hedging the stock position and handling the inventory to meet the delivery of the stocks. The two motivations can conflict each other in case of cash settlement: the first one dominates at the beginning of the trading horizon, and the second one at the end. Also a TWAP-based contract may induce complex trading strategies. A linear and cash-settled contracts allow for a statistical arbitrage, instead physically settled contracts do not. TWAP-based contracts are associated to statistical arbitrages. 

Summing up, we provide two main results: the cash settlement option is more expensive than the physically delivered one; manipulation and statistical arbitrages are more likely to emerge in case of cash-settled contracts rather than physically delivered contracts.  

The paper is organized as follows. In Section \ref{MOD} we present the model and the contract agreements that are used in  merger\&acquisition deals. In Section \ref{PROB} we address the mathematical problem.
In Section \ref{NUM} we provide the numerical analysis.
In Section \ref{EXT} we analyze two extensions of the model (regulatory approval, TWAP based contracts).
In Appendix \ref{CLOSE} we provide closed form solutions for linear contracts in case of null drift for the stock price evolution dynamics and risk free rate. In Appendix \ref{Numerical method} we detail the numerical solution technique. In Appendix \ref{additional} we provide some additional Figures and Tables.

\section{The Model}
\label{MOD}
We consider contractual agreements that foresee the delivery at time $T$ by a broker of cash or stocks to the counterpart. At the same time, the broker hedges the position, by trading on the stock market, and determines the fee for the contract.

We consider a filtered probability space $(\Omega,\mathcal{F}, \{\mathcal{F}_t\}_{t\geq 0}, \mathbb{P})$, where $\{\mathcal{F}_t\}_{t\geq 0}$ denotes the filtration that represents the information available up to time $t$.

The broker's inventory of the stock evolves as follows:
\begin{equation}
    Q(t)=Q(0)+\int_0^{t}v(s)ds,
\end{equation}
where the trading speed $v$ belongs to the set of admissible strategies $\mathcal{A}$ defined as
\begin{equation*}
    \mathcal{A} := \left\{ v \text{ is } \mathcal{F}\text{-progressively measurable and } |v(t)| \leq C \text{ on } [0,T] \times \Omega \right\}.
\end{equation*}

We consider the simplest linear market impact model, see \cite{ALGCH}.
The stock price is influenced in a permanent way by the trades performed by the broker:
\begin{equation}
    dS(t)=(\mu+bv(t)) dt+\sigma dW(t),
\end{equation}
where $b$ is permanent market impact, $\mu$ and $\sigma$ are the drift and the diffusion coefficients. \\
Assuming that the broker has access to the risk free investment at a rate $r$, the wealth evolves as follows: 
\begin{equation}
\label{WEALTH}
    dX(t)=rX(t)dt-v(t)(S(t)+lv(t))dt,
\end{equation}
where $l$ is the temporary market impact. 

We also define a function $L(a, b)$ to represent the cost of liquidating the inventory of the stock from level $a$ to $b$, incorporating frictions, transaction fees and other market effects. Following \cite{CAR15}, we set 
\begin{equation}\label{eq_L}    
L(a,b)=\alpha(a-b)^2,
\end{equation}
where $\alpha>0$ is the penalty coefficient.

We consider several different contractual agreements. In what follows, we define the contractual terms and the payoffs for the broker. 
For each contract, the broker receives a fee at $t=0$ to be determined through utility indifference arguments. Once the contract is signed, the broker hedges the stock position and receives a payoff at time $T$ associated with the contractual agreement. In all the payoffs we insert the wealth of the broker at maturity $X(T)$ that includes the wealth associated with trading.
Through backward in time arguments, the fee is determined taking into account the hedging strategy.

We consider both physical delivery and cash settlement agreements for a certain amount of stocks ($N$) at time $T$. In case of physical delivery, the broker is obliged to deliver the stocks at time $T$ acquiring what is left over on the market ($N-Q(T)$), or selling any surplus if $Q(T)>N$. In case of cash settlement we assume that at time $T$ the broker unwinds the stock position $Q(T)$. In all the cases, we assume that the broker ends with no stocks.

\paragraph{1. Physical delivery}
The broker delivers $N$ stocks at time $T$ to the counterpart. Given the amount of stocks $Q(T)$ at time $T$, the broker must acquire (sell if negative) $N-Q(T)$ stocks on the market with a disbursement of $(N-Q(T))S(T)$ and a liquidity cost $L(Q(T),N)$. As a result, the payoff for  the broker is:
\begin{equation}\label{payoff_physical}
Y^{Physical}(T)=X(T)-(N-Q(T))S(T)-L(Q(T),N).
\end{equation}
The counterpart gets $N$ stocks with book value $NS(T)$.

\paragraph{2. Total Return Swap}
The counterpart enters a TRS on the stock with the broker. The TRS contract references $N$ shares of the underlying stock, with an initial price $S(0)$.
The TRS foresees the exchange of the appreciation of the stock (from the broker to the counterpart), i.e., the broker pays $NS(T)$ and receives $NS(0)$ (we ignore the risk free rate appreciation).
At time $T$, the broker unwinds the full position on the underlying stock with a reward $Q(T)S(T)$ and a liquidation cost $L(Q(T),0)$. Therefore, the payoff for the broker turns out to be: 
\begin{equation}
Y^{TRS}(T)=X(T)-N(S(T)-S(0))+Q(T)S(T)-L(Q(T),0).
\end{equation}
The TRS is a cash-settled agreement. It can be interpreted as the cash settlement version of the physical delivery delivery of $N$ stocks analyzed above. The only difference is that the broker receives $NS(0)$. In what follows, to compare the two contracts, we eliminate this component in the payoff (which is going to be included in the fee to be defined through indifference utility), and we consider the following:
\begin{equation}
\label{cash settlement}
Y^{Cash}(T)=X(T)-(N-Q(T))S(T)-L(Q(T),0).
\end{equation}
Notice that the only difference between the two contracts is provided by the liquidation costs. The counterpart gets $NS(T)$ in cash.

\paragraph{3. Collar (cash settlement)}
The counterpart may want to limit the exposure to market movements. In this case, the cash flow between the broker and the counterpart is a nonlinear function of the asset price at maturity. The most popular way to achieve this goal is through a collar agreement. At time $T$ the payoff is given by three components: long position on the underlying stock and on an European put option with strike price $K_1$, short position on an European call option with strike price $K_2$, where $K_1 < K_2$:
$$
Z(T)=S(T)+[K_1-S(T)]^+-[S(T)-K_2]^+,
$$
being $[\cdot]^+=\max\{\cdot,0\}$. The payoff structure limits the potential upside on the stock (up to $K_2$) in exchange for downside protection (below $K_1$). Both the put and call options share the same maturity date $T$. 

The agreement is purely financial (cash settlement). Instead of receiving the appreciation or depreciation of the stock (as in the TRS), the payoff of the counterpart is $NZ(T)$ in cash and the broker's cash flow turns out to be 
\begin{equation}\label{payoff_Collar}
Y^{Collar (cash)}(T)=X(T)-NZ(T)+Q(T)S(T)-L(Q(T),0).
\end{equation}

\paragraph{4. Collar (physical delivery)}
The broker signs a physical delivery contract for $N$ stocks, sells a collar and receives the market price for $N$ stocks, in this way the 
broker is short of a collar as in the cash settlement case:
\begin{equation}\label{payoff_physical_collar}
Y^{Collar (physical)}(T)=X(T)-(N-Q(T))S(T)-L(Q(T),N)+N(S(T)-Z(T))
\end{equation}
$$
=X(T)-NZ(T)+Q(T)S(T)-L(Q(T),N).
$$
The payoff of the counterpart is provided by $N$ stocks with book value $NS(T)$ and $N(Z(T)-S(T))$ in cash, as a result (summing book and market values) it is $NZ(T)$.


\section{Solution of the problem}
\label{PROB}
The broker addresses two problems: the optimal hedging and the definition of the fee for the contract through utility indifference pricing. 


We define the value function $\mathcal{J}$ as follows
\begin{equation*}
    \mathcal{J}(t,x,q,S)=\sup_{v\in\mathcal{A}}\mathbb{E}\left[-\exp(-\gamma Y(T))| X(t)=x, Q(t)=q, S(t)=S\right],\ t\leq T,
\end{equation*}
where $\gamma$ is the risk aversion coefficient, $v$ is a strategy and $\mathcal{A}$ is the set of the admissible strategies. 

For each contract, the broker's fee is determined through utility indifference. Following \cite[Section 3]{GUE17}, we assume that at $t=0$:
\begin{itemize}
    \item the broker enters into the contract with initial wealth $X_0$ and zero inventory of stocks, the contract is signed for a fee $\mathcal{P}$;
    \item the counterpart gives $q_0$ stocks to the broker, pays him the fee $\mathcal{P}$ and receives $q_0S_0$ in cash from the trader.
\end{itemize}

Following \cite[Section~2.3]{CAR} and \cite[Section~3]{GUE17}, the indifference fee $\mathcal{P}$ is defined as the amount of money that makes the broker with initial wealth $X_0$ equally satisfied between two alternatives:  
\begin{enumerate}
    \item to invest $X_0$ at the risk-free rate, which yields the terminal utility
\begin{equation}\label{ind_cond}  
        -\exp(-\gamma e^{rT} X_0),
    \end{equation}
    \item to enter the contract receiving  $\mathcal{P} - q_0 S_0$ in cash and $q_0$ shares, in exchange of  delivering the shares and/or the cash payoff at time $T$.
\end{enumerate}

In our setting, direct computation yields
\begin{align*}
    \mathcal{J}(0, X_0 + \mathcal{P}- q_0 S_0 , q_0, S_0)
    &= -\exp(-\gamma e^{rT}(X_0 + \mathcal{P})) \, \theta(0, q_0, S_0) \\
    &= -\exp\!\left[-\gamma e^{rT}\!\left(X_0 + \mathcal{P} - \frac{e^{-rT}}{\gamma}\log \theta(0, q_0, S_0)\right)\right].
\end{align*}
The existence of such a function $\theta$ is established in Section~\ref{section-viscosity solution}. Setting
\[
    \mathcal{P} = \frac{e^{-rT}}{\gamma}\log \theta(0, q_0, S_0),
\]
the above value function equates to $-\exp(-\gamma e^{rT} X_0)$. The argument can be extended to any  $t \in [0,T)$ to conclude that
\begin{equation}
    \mathcal{P}(t,q,S)
    = \frac{e^{-r(T-t)}}{\gamma}\log \theta(t,q,S)
    \label{indifference_price}
\end{equation}
is the utility indifference fee of the contract. 

The value functions for the four contracts are as follows: 
\begin{itemize}
    \item Physical delivery:
    \begin{equation*}
        \mathcal{J}_1(t,x,q,S)=\sup_{v\in\mathcal{A}}\mathbb{E}[-\exp(-\gamma(X(T)-(N-Q(T))S(T)-L(Q(T), N))],
    \end{equation*}
    \item TRS:
    \begin{equation*}
        \mathcal{J}_2(t,x,q,S)=\sup_{v\in\mathcal{A}}\mathbb{E}[-\exp(-\gamma(X(T)-(N-Q(T))S(T)-L(Q(T), 0))].
    \end{equation*}
    \item Collar (cash):
    \begin{align*}
\mathcal{J}_3(t,x,q,S)=&\sup_{v\in\mathcal{A}}\mathbb{E}[-\exp(-\gamma(X(T)-(N-Q(T))S(T)\\&+N[S(T)-K_2]^{+}-N[K_1-S(T)]^{+}-L(Q(T),0)].
    \end{align*}
    \item Collar (physical):
    \begin{align*}
\mathcal{J}_4(t,x,q,S)=&\sup_{v\in\mathcal{A}}\mathbb{E}[-\exp(-\gamma(X(T)-(N-Q(T))S(T)\\&+N[S(T)-K_2]^{+}-N[K_1-S(T)]^{+}-L(Q(T),N)].
    \end{align*}
\end{itemize}

\subsection{Characterization of the value function}\label{section-viscosity solution}
The above payoffs can be written in a common framework as follows:
\begin{equation}\label{value function}
    \mathcal{J}(t,x,q,S)=\sup_{v\in\mathcal{A}}\mathbb{E}[-\exp(-\gamma(X(T)+Q(T)S(T)-\Pi(S(T))-L(Q(T),\cdot))]
\end{equation}
where $L(q,\cdot):\mathbb{R}\times \{0,N\} \rightarrow\mathbb{R}$ represents the market friction costs at maturity, depending on the cash or physical settlement feature, and $\Pi(S):\mathbb{R}\rightarrow\mathbb{R}$ is the payoff associated with the financial contract. 

We make the following assumption on $L(q)$ (with an abuse of notation we avoid the second argument) and $\Pi(S)$.

\begin{assumption}\label{assumption on function}
    $L(q)$ is defined as in Equation (\ref{eq_L}), $\Pi(S)$ is a globally Lipschitz-continuous function.
\end{assumption}
All the four contracts satisfy this assumption. In the following, we use the viscosity solution framework to characterize the value function. 

For $t\in[0,T]$ and $x,q,S,p_x,p_q,p_S,M_{SS}\in\mathbb{R}$, the Hamiltonian of the stochastic control problem is
\begin{equation}\label{Hamiltonian}
    \mathcal{H}(t,x,q,S,p_x,p_q,p_S,M_{SS})=\sup_{|v|\leq C}\{(\mu+bv)p_{S}+(rx-(S+lv)v)p_{x}+vp_q+\frac{1}{2}\sigma^2M_{SS}\}.
\end{equation}

As
    \begin{equation*}
        d(e^{-rt}X(t))=e^{-rt}(-v(t)(S(t)+lv(t))dt),
    \end{equation*}
we can compute
\begin{align*}
    e^{-rT}X(T)-e^{-rt}X(t)
    &=\int_t^{T}e^{-ru}(-v(u)(S(u)+lv(u))\,du)\\
    &= -\int_{t}^{T}e^{-ru}S(u)\,dQ(u)-\int_{t}^{T}e^{-ru}lv^2(u)\,du\\
    &= -e^{-ru}S(u)Q(u)\bigg|^{T}_t + \int_{t}^{T}e^{-ru}Q(u)\,dS(u)
    -\int_{t}^{T} r e^{-ru}S(u)Q(u)\,du \\
    &\quad -\int_{t}^{T}e^{-ru}lv^2(u)\,du\\
    &=e^{-rt}Q(t)S(t)-e^{-rT}Q(T)S(T)\\
    &\quad +\int_t^{T}e^{-ru}\big[(-rS(u)+\mu+bv(u))Q(u)-lv^2(u)\big]\,du
    +\int_t^{T}\sigma e^{-ru}Q(u)\,dW(u)
\end{align*}
    and therefore, since $X(t)=x,\,Q(t)=q,$ and, with an abuse of notation, $S(t)=S$, we get
    \begin{align*}
        X(T)+Q(T)S(T)=e^{r(T-t)}(x+qS)+e^{r(T-t)}\left(\int_t^{T}e^{-r(u-t)}\left[(-rS(u)+\mu+bv(u))Q(u)-lv^2(u)\right]du\right.\\
        \left.+\int_t^{T}\sigma e^{-r(u-t)}Q(u)dW(u)\right).
    \end{align*}
 Then the value function in Equation (\ref{value function}) can be written as:
    \begin{equation}\label{transformation}
        \mathcal{J}(t,x,q,S)=-\exp(-\gamma e^{r(T-t)}(x+qS))\theta(t,q,S)
    \end{equation}
    where
    \begin{align*}
    \theta(t,q,S)=\inf_{v\in\mathcal{A}}\mathbb{E}\left[ \exp\left(-\gamma e^{r(T-t)}\left\{\int_t^{T}e^{-r(u-t)}\left[(-rS(u)+\mu+bv(u))Q(u)-lv^2(u)\right]du\right.\right.\right.\\
     \left.\left.  \left. +\int_t^{T}\sigma e^{-r(u-t)}Q(u)dW(u) -e^{-r(T-t)}\Pi(S(T))-e^{-r(T-t)}L(Q(T))\right\}\right)\right].
    \end{align*}
    To simplify the notation, we define the random variable $\Gamma(t,q,S,v)$ as
    \begin{align*}
    \Gamma(t,q,S,v)=
\left\{\int_t^{T}e^{-r(u-t)}\left[(-rS(u)+\mu+bv(u))Q(u)-lv^2(u)\right]du\right.\\
     \left.
      +\int_t^{T}\sigma e^{-r(u-t)}Q(u)dW(u) -e^{-r(T-t)}\Pi(S(T))-e^{-r(T-t)}L(Q(T))\right\}
        \end{align*}
    and therefore
    \begin{equation}\label{theta}
\theta(t,q,S)=\inf_{v\in\mathcal{A}}\mathbb{E}\left[\exp(-\gamma e^{r(T-t)}\Gamma(t,q,S,v))\right].
    \end{equation}

The following regularity result can be established.

\begin{lemma}\label{locally bounded}
For any $(t,q,S)\in[0,T]\times\mathbb{R}\times\mathbb{R}$, the function $\theta(t,q,S)$ is locally bounded.
\end{lemma}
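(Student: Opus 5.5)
Since $\theta$ is an infimum of expectations of positive random variables, $\theta\ge 0$. So it suffices to bound $\theta$ from above on compact sets, and for that it is enough to test one admissible control. I would take $v\equiv 0$, which belongs to $\mathcal{A}$. Then $Q(u)\equiv q$ and $S(u)=S+\mu(u-t)+\sigma(W(u)-W(t))$. The random variable $\Gamma(t,q,S,0)$ reduces to
\begin{equation*}
\int_t^T e^{-r(u-t)}(-rS(u)+\mu)q\,du+\sigma q\int_t^T e^{-r(u-t)}dW(u)-e^{-r(T-t)}\Pi(S(T))-e^{-r(T-t)}L(q).
\end{equation*}
Hence $\theta(t,q,S)\le \mathbb{E}[\exp(-\gamma e^{r(T-t)}\Gamma(t,q,S,0))]$, and I need to bound the right-hand side uniformly for $(t,q,S)$ in a compact set $K\subset[0,T]\times\mathbb{R}\times\mathbb{R}$.

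The deterministic and Gaussian pieces are handled directly:
\begin{itemize}
\item $e^{-r(T-t)}L(q)=e^{-r(T-t)}\alpha(q-\bar N)^2$ with $\bar N\in\{0,N\}$, which is bounded on $K$.
\item By Assumption \ref{assumption on function}, $|\Pi(S(T))|\le |\Pi(0)|+\mathrm{Lip}(\Pi)|S(T)|$, and $|S(T)|\le |S|+|\mu|T+\sigma|W(T)-W(t)|$.
\item $|S(u)|\le |S|+|\mu|T+\sigma\sup_{u\in[t,T]}|W(u)-W(t)|$.
\end{itemize}
Putting these together gives
\begin{equation*}
\exp(-\gamma e^{r(T-t)}\Gamma)\le C_K\exp\Big(c_K\sup_{u\in[t,T]}|W(u)-W(t)|+c_K\Big|\int_t^Te^{-r(u-t)}dW(u)\Big|\Big),
\end{equation*}
where $C_K,c_K$ depend only on $K$ and the model parameters ($\gamma,r,\mu,\sigma,\alpha,N$, the Lipschitz data of $\Pi$ and $T$).

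For the final step, apply Cauchy–Schwarz to split the two exponentials. The stochastic integral is Gaussian with variance at most $T$, so its exponential moments are finite. The running supremum of Brownian motion has Gaussian tails by the reflection principle, $\mathbb{P}(\sup_{[0,T]}|W|>a)\le 4\mathbb{P}(W(T)>a)$, so $\mathbb{E}[\exp(2c\sup|W|)]<\infty$. Both bounds are uniform in $t\in[0,T]$. This yields $0\le\theta\le M_K<\infty$ on $K$, which is local boundedness.

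The only mildly delicate point is the exponential integrability of the running maximum of $|S(u)|$ that enters through the term $-rS(u)q$. The reflection-principle bound handles it, and when $r=0$ the term disappears altogether.
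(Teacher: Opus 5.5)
Your proof is correct and follows the same route as the paper: you test the admissible control $v\equiv 0$, bound $\Gamma(t,q,S,0)$ using that $\Pi$ is Lipschitz and $L$ is quadratic, and conclude from the finite exponential moments of the running maximum of an arithmetic Brownian motion, uniformly on compacts. The one small difference is that you treat the stochastic integral $\sigma q\int_t^T e^{-r(u-t)}\,dW(u)$ separately as a Gaussian variable via Cauchy--Schwarz, whereas the paper absorbs it (implicitly, via integration by parts) into the pathwise bound $c_2\bigl(1+\sup_{u\in[t,T]}|S(u)|\bigr)$, so your version is if anything more explicit.
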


\begin{proof}
Since $\theta(t,q,S)\ge 0$ by definition, it suffices to prove a local upper bound. Let $D \subset [0,T]\times\mathbb{R}\times\mathbb{R}$ be compact.
By definition,
\[
\theta(t,q,S)=\inf_{v\in\mathcal{A}} \mathbb{E}\Big[\exp\big(-\gamma e^{r(T-t)}\Gamma(t,q,S,v)\big)\Big].
\]
Choosing $v\equiv 0$, we obtain
\[
\theta(t,q,S)\le 
\mathbb{E}\Big[\exp\big(-\gamma e^{r(T-t)}\Gamma(t,q,S,0)\big)\Big].
\]
Under $v\equiv 0$, the inventory is constant and $S$ is an arithmetic Brownian motion. Using that $\Pi(S)$ is globally Lipschitz and $L(q)$ has quadratic growth, that is, there exists $c_1>0$ such that
\[
L(q)\le c_1(1+|q|^2),
\]
a direct estimate yields
\[
|\Gamma(t,q,S,0)|
\le
c_2\Big(1+\sup_{u\in[t,T]}|S(u)|\Big),
\qquad (t,q,S)\in D,
\]
for some constant $c_2>0$.
Therefore, there exists $c_3>0$ such that
\[
\exp\big(-\gamma e^{r(T-t)}\Gamma(t,q,S,0)\big)
\le
\exp\Big(c_3\big(1+\sup_{u\in[t,T]}|S(u)|\big)\Big).
\]
Since the running supremum of an arithmetic Brownian motion over a finite time interval has finite exponential moments, it follows that 
\[
\sup_{(t,q,S)\in D}
\mathbb{E}\Big[\exp\big(c_3\sup_{u\in[t,T]}|S(u)|\big)\Big] < \infty.
\]
Hence, we proved that there exists $c_D>0$ such that
\[
\theta(t,q,S)\le c_D,
\qquad (t,q,S)\in D.
\]
\end{proof}

\begin{proposition}[Viscosity solution]\label{viscosity solution}
Under Assumption \ref{assumption on function}, the value function $\mathcal{J}$ is the unique viscosity solution of the HJB equation
\begin{equation}\label{HJB}
    \partial_t\mathcal{J}+\mu\partial_S\mathcal{J}+\frac{1}{2}\sigma^2\partial_{SS}\mathcal{J}
    +\sup_{|v|\le C}\Big\{bv\partial_S\mathcal{J}+(rx-(S+lv)v)\partial_x\mathcal{J}+v\partial_q\mathcal{J}\Big\}=0
\end{equation}
on $[0,T)\times\mathbb{R}\times\mathbb{R}\times\mathbb{R}$, with terminal condition
\[
    \mathcal{J}(T,x,q,S)=-\exp\big(-\gamma(x+qS-\Pi(S)-L(q))\big).
\]
\end{proposition}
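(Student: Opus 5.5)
The plan is to split the statement into two parts. First, $\mathcal{J}$ is a viscosity solution of (\ref{HJB}); this follows from the dynamic programming principle. Second, $\mathcal{J}$ is the unique such solution in a suitable growth class; this follows from a comparison principle. To handle the unbounded state space, I would work throughout with the factorization (\ref{transformation}), $\mathcal{J}=-\exp(-\gamma e^{r(T-t)}(x+qS))\theta(t,q,S)$. The prefactor is smooth and strictly negative, so $\mathcal{J}$ is a viscosity solution of (\ref{HJB}) if and only if $\theta$ is a viscosity solution of the reduced equation obtained by substituting the ansatz. That reduced equation is a linear-in-$\theta$ HJB on $[0,T)\times\mathbb{R}^2$:
\[
\partial_t\theta+\mu\partial_S\theta+\tfrac12\sigma^2\partial_{SS}\theta
+\inf_{|v|\le C}\Big\{(bv\,\partial_S+v\,\partial_q)\theta
+\gamma e^{r(T-t)}\big(\tfrac12\gamma e^{r(T-t)}\sigma^2q^2-(\mu+bv-rS)q+lv^2\big)\theta
-\gamma e^{r(T-t)}\sigma^2 q\,\partial_S\theta\Big\}=0,
\]
with terminal condition $\theta(T,q,S)=\exp(\gamma(\Pi(S)+L(q)))$. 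I would verify this substitution explicitly, since the $x$-dependence cancels exactly. The appropriate notion is discontinuous viscosity solutions via the semicontinuous envelopes $\theta^*,\theta_*$. These envelopes are finite by Lemma \ref{locally bounded}.

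\textbf{Existence.} I would follow the weak dynamic programming approach of Bouchard--Touzi, which avoids needing continuity a priori. Controls are bounded ($|v|\le C$) and the dynamics $(Q,S)$ have Lipschitz, even affine, coefficients, so the standard conditions hold. The running "reward" inside the exponential has at most quadratic growth in $q$ and linear growth in $S$. By the exponential moment bound for the running supremum of $S$ used in Lemma \ref{locally bounded}, the relevant expectations are uniformly integrable on compacts. The weak DPP then gives that $\theta^*$ is a viscosity subsolution and $\theta_*$ a viscosity supersolution. For the terminal condition, I would show $\theta^*(T,\cdot)\le e^{\gamma(\Pi+L)}\le\theta_*(T,\cdot)$. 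The upper bound comes from taking $v\equiv0$ on a shrinking interval and using continuity of $\Pi$ and $L$. The lower bound uses that $|v|\le C$, so $Q$ and $S$ move by at most $O(T-t)+$ Brownian increments as $t\to T$. Uniform integrability then lets one pass to the limit.

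\textbf{Uniqueness.} The main obstacle is a comparison principle on an unbounded domain. It is complicated by the coefficient $\gamma e^{r(T-t)}(\tfrac12\gamma\sigma^2 q^2 e^{r(T-t)}-\dots)\theta$, which grows quadratically in $q$ and linearly in $S$ (through the $rSq$ term), and by a terminal datum that grows like $e^{c(|S|+q^2)}$. I would first change variables, setting $\theta=e^{\phi}$ or, better, working with $w=\log\theta$, which turns the equation into a quadratic-gradient HJB with polynomially growing data. Within this growth class I would prove comparison via the Ishii lemma with the doubling penalty $\frac{1}{2\varepsilon}|z-z'|^2$. To this I would add a localizing term $\delta e^{\lambda(T-t)}(1+q^2+S^2)^{k}$, with $\lambda$ large enough that it is a strict supersolution perturbation. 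The control set is compact and the drift in $(q,S)$ is affine and bounded in $v$, so the Hamiltonian is Lipschitz in the gradient locally uniformly. This gives the standard estimate $H(z,p)-H(z',p)\le \omega(|z-z'|(1+|p|)+\frac{|z-z'|^2}{\varepsilon})$. The quadratic-in-$q$ potential term has a favorable sign for the supersolution side, which I would check carefully. Comparison then yields $\theta^*\le\theta_*$, hence $\theta$ is continuous and unique in the class of functions with $|\log\theta|\le c(1+q^2+|S|)$. This uniqueness then transfers back to $\mathcal{J}$ through (\ref{transformation}).
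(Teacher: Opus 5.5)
Your reduction to $\theta$ is correct: the $rx\,\partial_x$ terms cancel, and the reduced equation and terminal datum $e^{\gamma(\Pi+L)}$ are right. The existence half (weak DPP, semicontinuous envelopes, two-sided terminal estimate from $|v|\le C$) is sound, and more careful than the paper, which applies Pham's Propositions 4.3.1--4.3.2 and Theorem 4.3.2 directly to $\mathcal{J}$ and cites a comparison principle (\cite{DAL}) for uniqueness. To run a DPP on $\theta$ itself you should first remove the stochastic integral in $\Gamma$ by a Girsanov change of measure. That change is what produces the drift term $-\gamma e^{r(T-t)}\sigma^2q\,\partial_S\theta$.

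\textbf{The gap is in your comparison step.} For $w=\log\theta$ the equation is
\[
\partial_t w+\mu\partial_S w+\tfrac12\sigma^2\partial_{SS}w+\tfrac12\sigma^2\big(\partial_S w-\gamma e^{r(T-t)}q\big)^2+\gamma e^{r(T-t)}(rS-\mu)q+\inf_{|v|\le C}\big\{v\big(b\partial_S w+\partial_q w-\gamma e^{r(T-t)}bq\big)+\gamma e^{r(T-t)}lv^2\big\}=0,
\]
which is the fee equation (\ref{indifferencepriceHJB}) for $\gamma e^{r(T-t)}\mathcal{P}$. Compact controls make only the $\inf_v$ part Lipschitz in the gradient; the term $\tfrac12\sigma^2(\partial_S w)^2$ is not.

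Take the localizer $\delta\Psi$ with $\Psi=e^{\lambda(T-t)}(1+q^2+S^2)^k$. At the maximum of your doubled function the jets have gradients $p+\delta D\Psi(\hat z)$ and $p-\delta D\Psi(\hat z')$, with $p=(\hat z-\hat z')/\varepsilon$. The difference of the quadratic terms then contains $\sigma^2p_S\,\delta\big(\partial_S\Psi(\hat z)+\partial_S\Psi(\hat z')\big)$. You only know $|\hat z-\hat z'|^2/\varepsilon\to0$, so $p_S$ can blow up as $\varepsilon\to0$ with $\delta$ fixed, and no choice of $\lambda$ absorbs this term. Equivalently, $w_2+\delta\Psi$ is not a strict supersolution, because of the cross term $\sigma^2(\partial_Sw_2-\gamma e^{r(T-t)}q)\,\delta\partial_S\Psi$.

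The missing idea is the convexity argument behind comparison results of Da Lio--Ley type. Let $H_c(p)=\tfrac12\sigma^2(p_S-\gamma e^{r(T-t)}q)^2$. For $\eta\in(0,1)$ one has $H_c(p+\pi)\le\eta H_c(p/\eta)+(1-\eta)H_c(\pi/(1-\eta))$. This shows that $w_1-\eta w_2$ is a subsolution of an equation in which the unbounded doubling gradient no longer appears. You then compare it with a smooth explicit supersolution and let $\eta\uparrow1$. Going back to $\theta$ does not rescue the argument directly. There the equation is linear in the gradient, but the zeroth-order coefficient contains $+\tfrac12\gamma^2e^{2r(T-t)}\sigma^2q^2$, which is unbounded with the non-proper sign. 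So the "favorable sign" you intended to check goes the wrong way.

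\textbf{Two further problems.} First, your uniqueness class $|\log\theta|\le c(1+q^2+|S|)$ does not contain the value function when $r\neq0$. The running term $\gamma e^{r(T-u)}rS(u)Q(u)$ and $|Q(u)-q|\le C(T-t)$ give
\[
\log\theta(t,q,S)=\gamma\big(e^{r(T-t)}-1\big)qS+O(1+q^2+|S|),
\]
with the upper bound from $v\equiv0$ and the lower bound from Jensen. For fixed large $|q|$ this exceeds $c(1+q^2+|S|)$ as $|S|\to\infty$. You need the quadratic class $c(1+q^2+S^2)$, which is exactly where the convexity-based comparison is available.

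Second, uniqueness for $\theta$ transfers to $\mathcal{J}$ only among functions of the product form (\ref{transformation}). The statement concerns (\ref{HJB}) on $[0,T)\times\mathbb{R}^3$, so the comparison has to be run with $x$ kept as a state variable. The extra transport term $(rx-(S+lv)v)\partial_x$ is Lipschitz in the state uniformly in $|v|\le C$, so that part is routine once the quadratic-gradient issue above is handled.
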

\begin{proof}
The proof follows the standard viscosity-solution approach for stochastic control problems; see \cite[Chapter 4]{PHA}.
First, by the representation formula
\[
\mathcal{J}(t,x,q,S)
=
-\exp\!\big(-\gamma e^{r(T-t)}(x+qS)\big)\theta(t,q,S),
\]
since the exponential prefactor is continuous and locally bounded, and $\theta$ is locally bounded by Lemma \ref{locally bounded}, it follows that $\mathcal{J}$ is locally bounded on $[0,T]\times\mathbb{R}\times\mathbb{R}\times\mathbb{R}$.
Moreover, the control set is the compact set $U=[-C,C]$, the state dynamics have continuous coefficients, and such coefficients satisfy the standard linear-growth condition. Therefore, the dynamic programming principle applies. By \cite[Proposition 4.3.1]{PHA}, the value function $\mathcal{J}$ is a viscosity supersolution of Equation \eqref{HJB}. By \cite[Proposition 4.3.2]{PHA}, $\mathcal{J}$ is also a viscosity subsolution of Equation \eqref{HJB}. Hence $\mathcal{J}$ is a viscosity solution of Equation \eqref{HJB} on $[0,T)\times\mathbb{R}^3$.\\
The terminal payoff is given by
\[
g(x,q,S):=-\exp\big(-\gamma(x+qS-\Pi(S)-L(q))\big).
\]
Under Assumption \ref{assumption on function}, the function $g$ is continuous. Moreover, the finiteness of the Hamiltonian follows from the compactness of the control set and the continuity of the coefficients with respect to $v$. Therefore, we can apply \cite[Theorem 4.3.2]{PHA}, which ensures that the value function satisfies the terminal condition in the viscosity sense. In particular, since $g$ is continuous, this implies that the terminal condition holds in the classical sense:
\[
\mathcal{J}(T,x,q,S)=g(x,q,S).
\]
Finally, uniqueness follows from the comparison principle for viscosity solutions, see \cite{DAL}. Therefore, $\mathcal{J}$ is the unique viscosity solution of Equation \eqref{HJB} with the above terminal condition.
\end{proof}

\subsection{Optimal trading strategy}

We now derive a formal characterization of the optimal control from the Hamiltonian of the HJB equation for $\mathcal J$. From Equation \eqref{HJB}, the control enters through the term
\[
\sup_{|v|\le C}
\Big\{
bv\,\partial_S\mathcal J-(S+lv)v\partial_x\mathcal J+v\partial_q\mathcal J
\Big\}.
\]
Rearranging the terms, this can be written as
\[
\sup_{|v|\le C}
\Big\{
-l\,\partial_x\mathcal J \, v^2
+
\big(b\partial_S\mathcal J - S\partial_x\mathcal J + \partial_q\mathcal J\big)v
\Big\}.
\]
Since $\partial_x\mathcal J=-\gamma e^{r(T-t)}\,\mathcal J(t,x,q,S)>0$, see Equation (\ref{transformation}), the above expression is strictly concave in $v$, and therefore admits a unique maximizer in $[-C,C]$.

The first-order condition for the unconstrained problem yields
\[
v^{\mathrm{unc}}(t,x,q,S)
=
\frac{
b\partial_S\mathcal J - S\partial_x\mathcal J + \partial_q\mathcal J
}{
2l\,\partial_x\mathcal J
}.
\]
Taking into account the constraint $|v|\le C$, the optimal control is given by
\begin{equation}\label{optimal_control}
v^*(t,x,q,S)
=\min\left\{C; \max\left\{-C;
\frac{
b\partial_S\mathcal J - S\partial_x\mathcal J + \partial_q\mathcal J
}{
2l\,\partial_x\mathcal J
}
\right\}\right\}.
\end{equation}

\subsection{Utility indifference fee}\label{section-utility indifference fee}
We are now able to compute the fees of the contracts through indifference utility arguments.

\begin{proposition}(Viscosity Solution of Utility indifference fee)
Under Assumption \ref{assumption on function}, $\forall(t,q,S)\in[0,T]\times\mathbb{R}\times\mathbb{R}$, $\mathcal{P}(t,q,S)$ is the viscosity solution of the associated HJB equation:
\begin{eqnarray}\label{indifferencepriceHJB}
    &&-\partial_t\mathcal{P}+r\mathcal{P}+(\mu-rS)q-\mu\partial_S\mathcal{P}-\frac{1}{2}\sigma^2\partial_{SS}\mathcal{P}-\frac{1}{2}\sigma^2\gamma e^{r(T-t)}(q-\partial_S\mathcal{P})^2\\&&\quad\quad\quad\quad+\sup_{|v|\leq C}\{-lv^2+(bq-b\partial_S\mathcal{P}-\partial_q\mathcal{P})v\}=0\nonumber
\end{eqnarray}
satisfying the terminal condition
\begin{equation*}
    \mathcal{P}(T,q,S)=\Pi(S)+L(q).
\end{equation*}
\end{proposition}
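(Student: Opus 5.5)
The plan is to transfer the viscosity property from $\mathcal J$ (Proposition \ref{viscosity solution}) to $\mathcal P$ through the change of variables
\[
\mathcal J(t,x,q,S)=-\exp\!\big(-\gamma e^{r(T-t)}(x+qS-\mathcal P(t,q,S))\big),
\]
which follows from \eqref{transformation} and \eqref{indifference_price}, since $\theta=\exp(\gamma e^{r(T-t)}\mathcal P)$. The map $\Phi:(t,x,q,S,p)\mapsto -\exp(-\gamma e^{r(T-t)}(x+qS-p))$ is smooth and strictly increasing in $p$. Hence test functions for $\mathcal P$ at a point $(t_0,q_0,S_0)$ correspond exactly to test functions for $\mathcal J$ at $(t_0,x_0,q_0,S_0)$ for any $x_0$. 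Concretely, if $\varphi(t,q,S)$ touches $\mathcal P$ from above (local max of $\mathcal P-\varphi$), then $\psi:=\Phi(t,x,q,S,\varphi)$ touches $\mathcal J$ from above at the same point, with the same ordering preserved by monotonicity. Local boundedness of $\mathcal P$ and the semicontinuous envelopes are handled in the same way, using Lemma \ref{locally bounded} and the relation $\mathcal P=\frac{e^{-r(T-t)}}{\gamma}\log\theta$. For this we also need $\theta$ to be locally bounded away from zero; I would prove that by choosing $v\equiv0$ in a Jensen-type lower bound, or simply by noting $\theta>0$ together with the continuity of the envelopes.

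The core step is computing the HJB operator of \eqref{HJB} applied to $\psi=\Phi(\cdot,\varphi)$. Write $\kappa=\gamma e^{r(T-t)}$ and $\psi=-e^{-\kappa(x+qS-\varphi)}$. The derivatives are:
\begin{align*}
\partial_x\psi&=-\kappa\psi,\qquad \partial_q\psi=-\kappa(S-\partial_q\varphi)\psi,\qquad \partial_S\psi=-\kappa(q-\partial_S\varphi)\psi,\\
\partial_{SS}\psi&=\big(\kappa^2(q-\partial_S\varphi)^2+\kappa\partial_{SS}\varphi\big)\psi,\qquad \partial_t\psi=\big(r\kappa(x+qS-\varphi)-\kappa\partial_t\varphi\big)\psi .
\end{align*}
I substitute these into \eqref{HJB}. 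In the Hamiltonian, the terms $-Sv\partial_x\psi$ and $v\partial_q\psi$ combine to give $-\kappa\psi(-\partial_q\varphi)v$ after the $Sv$ pieces cancel. The $rx\partial_x\psi$ term cancels against the $r\kappa x$ part of $\partial_t\psi$.

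Since $-\kappa\psi>0$, dividing by it turns the $\sup$ over $v$ into a $\sup$, with no sign flip. What remains is
\[
-\partial_t\varphi+r\varphi-rqS+\mu q-\mu\partial_S\varphi-\tfrac12\sigma^2\partial_{SS}\varphi-\tfrac12\sigma^2\kappa(q-\partial_S\varphi)^2+\sup_{|v|\le C}\{-lv^2+(bq-b\partial_S\varphi-\partial_q\varphi)v\},
\]
up to an overall sign that I will track carefully. This is exactly \eqref{indifferencepriceHJB}. Because division by the positive factor $-\kappa\psi$ preserves inequalities, the subsolution inequality for $\mathcal J$ becomes the corresponding inequality for $\mathcal P$, and likewise for supersolutions. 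Under the sign convention used here ($-\partial_t+\dots$), sub and super may swap, so the orientation of the inequalities has to be checked.

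The terminal condition comes directly from Proposition \ref{viscosity solution}. At $t=T$ we have $\kappa=\gamma$ and $\mathcal J(T)=-e^{-\gamma(x+qS-\Pi(S)-L(q))}$, which gives $\mathcal P(T,q,S)=\Pi(S)+L(q)$.

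I expect the main obstacle to be bookkeeping rather than anything conceptual. Two points need care: the sign and orientation of sub/supersolution inequalities under the monotone transformation (the sign of $\partial_p\Phi$ combined with division by the negative quantity $\psi$), and the requirement that test functions for $\mathcal J$ may depend on $x$. The latter is resolved because the $x$-dependence of $\mathcal J$ is explicit and smooth, so any test function can be chosen of the form $\Phi(\cdot,\varphi)$ without loss of generality, locally around the touching point.
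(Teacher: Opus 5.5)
You take the same route as the paper: transfer the viscosity property from $\mathcal J$ to $\mathcal P$ through the smooth, strictly monotone change of variables. Your final $\mathcal P$-equation is correct. However, the one step that actually needs care is stated backwards.

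The map $\Phi(t,x,q,S,p)=-\exp\bigl(-\gamma e^{r(T-t)}(x+qS-p)\bigr)$ satisfies $\partial_p\Phi=\gamma e^{r(T-t)}\Phi<0$. It is therefore strictly \emph{decreasing} in $p$: a larger fee means lower utility. Hence $\mathcal J^*=\Phi(\cdot,\mathcal P_*)$ and $\mathcal J_*=\Phi(\cdot,\mathcal P^*)$. If $\varphi$ touches $\mathcal P^*$ from above at $(t_0,q_0,S_0)$, then $\psi=\Phi(\cdot,\varphi)$ touches $\mathcal J_*$ from \emph{below} at $(t_0,x_0,q_0,S_0)$. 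The ordering is reversed, not preserved.

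The correct chain is as follows. The supersolution property of $\mathcal J$ gives $\partial_t\psi+\mu\partial_S\psi+\frac12\sigma^2\partial_{SS}\psi+\sup_{|v|\le C}\{\cdots\}\le 0$. With your $\kappa=\gamma e^{r(T-t)}$, this left-hand side equals $\kappa e^{-\kappa(x+qS-\varphi)}>0$ times the left-hand side of \eqref{indifferencepriceHJB} evaluated at $\varphi$. So the latter is $\le 0$, which is the subsolution inequality for the $\mathcal P$-equation; that equation is properly oriented, with $-\partial_t$ and $-\frac12\sigma^2\partial_{SS}$. Symmetrically, the subsolution property of $\mathcal J$ makes $\mathcal P_*$ a supersolution, which is exactly what the paper asserts.

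With an increasing $\Phi$ you would instead get ``$\varphi$ touches from above $\Rightarrow$ left-hand side $\ge 0$''. That is not a viscosity inequality for this equation at all. So your caveat that sub and super ``may swap'' cannot be left to later bookkeeping. The swap is forced by $\Phi$ being decreasing, and it is compensated by passing from the $+\partial_t$ form of \eqref{HJB} to the $-\partial_t$ form of the $\mathcal P$-equation.

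There are two smaller points.

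\textbf{Sign in the time derivative.} Your $\partial_t\psi$ has the wrong sign on the $\partial_t\varphi$ term. It should read $\partial_t\psi=\bigl(r\kappa(x+qS-\varphi)+\kappa\partial_t\varphi\bigr)\psi$; with your sign, the final display would contain $+\partial_t\varphi$ rather than $-\partial_t\varphi$.

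\textbf{Lower bound on $\theta$.} You rightly note this is needed for $\mathcal P$ to be locally bounded with finite envelopes, which the paper does not address. It cannot come from choosing $v\equiv 0$, since that only bounds the infimum $\theta$ from above. Pointwise positivity of $\theta$ is not enough either. Use Jensen uniformly in the control: $\theta\ge\exp\bigl(-\kappa\sup_{v\in\mathcal A}\mathbb E[\Gamma(t,q,S,v)]\bigr)$. Then $\sup_{v}\mathbb E[\Gamma]$ is locally bounded on compacts, because:
\begin{itemize}
\item $|v|\le C$ keeps both $Q$ and the drift of $S$ bounded, so the stochastic integral has zero mean;
\item $-lv^2\le 0$ and $-L\le 0$;
\item $\Pi$ has linear growth.
\end{itemize}
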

\begin{proof}
By Proposition \ref{viscosity solution}, $\mathcal{J}(t,x,q,S)$ is a viscosity solution of
\begin{equation*}
    \partial_t\mathcal{J}+\mu\partial_S\mathcal{J}+\frac{1}{2}\sigma^2\partial_{SS}\mathcal{J}+\sup_{|v|\leq C}\{bv\partial_S\mathcal{J}+(rx-(S+lv)v)\partial_x\mathcal{J}+v\partial_q\mathcal{J}\}=0.
\end{equation*}
Equations (\ref{indifference_price}) and (\ref{transformation}) imply that
\begin{equation}\label{fromJto_P}
\mathcal{J}(t,x,q,S)=-\exp[-\gamma e^{r(T-t)}(x+qS-\mathcal{P}(t,q,S))].
\end{equation}
By direct computation, we obtain
\begin{align*}
    \partial_t\mathcal{J}&=(-r(x+qS-\mathcal{P})-\partial_t\mathcal{P})\gamma e^{r(T-t)}\exp[-\gamma e^{r(T-t)}(x+qS-\mathcal{P})], \\
    \partial_S\mathcal{J}&=(q-\partial_S\mathcal{P})\gamma e^{r(T-t)}\exp[-\gamma e^{r(T-t)}(x+qS-\mathcal{P})], \\
    \partial_{SS}\mathcal{J}&=(-\gamma e^{r(T-t)}(q-\partial_S\mathcal{P})^2-\partial_{SS}\mathcal{P})\gamma e^{r(T-t)}\exp[-\gamma e^{r(T-t)}(x+qS-\mathcal{P})],\\
    \partial_x\mathcal{J}&=\gamma e^{r(T-t)}\exp[-\gamma e^{r(T-t)}(x+qS-\mathcal{P})], \\
    \partial_q\mathcal{J}&=(S-\partial_q\mathcal{P})\gamma e^{r(T-t)}\exp[-\gamma e^{r(T-t)}(x+qS-\mathcal{P})].
\end{align*}
Since the transformation \eqref{fromJto_P} is smooth and strictly monotone, the change of variables from $\mathcal{J}$ to $\mathcal{P}$ is valid in the viscosity sense. Therefore, the lower semicontinuous envelope and the upper semicontinuous envelope of $\mathcal{P}$ are, respectively, a viscosity supersolution and a viscosity subsolution of Equation \eqref{indifferencepriceHJB}. We conclude that $\mathcal{P}$ is a viscosity solution of  Equation \eqref{indifferencepriceHJB}.
\end{proof}

Notice that the optimal control provided in Equation (\ref{optimal_control}) can be rewritten as
\[\
v^*(t,q,S)
=\min\left\{C; \max\left\{-C;
\frac{
bq - b\partial_S\mathcal P(t,q,S) - \partial_q\mathcal P(t,q,S)
}{
2l
}
\right\}\right\}.
\].

In Appendix~\ref{CLOSE}, under the assumption $\mu = r = 0$, we derive closed-form expressions for the optimal control in the case of linear contracts in Equations ~\eqref{payoff_physical} and~\eqref{cash settlement}; see Propositions~\ref{CFPhys} and~\ref{CFTRS}. In Corollaries~\ref{PHYSICALprice} and~\ref{TRSprice} we also determine the corresponding optimal fees. In the subsequent numerical results section, when closed-form solutions are not available, we solve the problem numerically using a finite difference method; see Appendix~\ref{Numerical method} for further details.

\section{Numerical analysis}
\label{NUM}

In this section, we deal with the parameters reported in Table \ref{parame0}, where 
  $b, \ l, \ \alpha$ are as in \cite[Chapter 6.9]{CAR15}. The simulations are performed setting $S_0 = 45.0$ and $q_0 = 0.5$.
  
\begin{table}[tp]
	\centering
    \caption{Baseline set of parameters of the model.}
    \label{parame0}
	\begin{tabular}{|c|c|c|}
		\hline
		Parameter & Value & Interpretation \\
		\hline
        $r$ & 0.0& risk-free rate\\
        $\mu$ & 0.0& drift coefficients \\
		$b$ & $1e-3$ & permanent impact \\
		$l$ & $1e-3$ & temporary impact \\
		$\gamma$ & $1e-2$ & risk aversion \\
        $\sigma$ & 5 & volatility\\
		$N$ & 1 & quantity of shares \\
        $C$ & 10 & trading speed bound\\
        $\alpha$ & 0.2 & terminal penalty\\
        $T$ & 1 & terminal time\\
        $K_1$ & 40 & Lower execution price\\
        $K_2$ & 50 & Upper execution price\\
		\hline
	\end{tabular}\label{tab:params}
\end{table}
 First, in Figure~\ref{fig:simPhys_TRS}, we report the strategies in the interval $[0, T]$ for linear contracts (physical delivery and TRS). The optimal trading strategies do not depend on the evolution of the underlying asset price, as shown in Propositions \ref{CFPhys} and~\ref{CFTRS}. At the beginning of the trading period, the optimal solutions for physical delivery and TRS contracts look very similar.
In case of physical delivery, the inventory converges smoothly to the quantity of stocks that the broker has to deliver at maturity. In case of cash settlement, the need to unwind the position of the contract at maturity leads to an optimal strategy that first foresees to buy the stock and then to sell it ending with a null inventory at maturity. The path can be explained as an attempt by the broker to hedge the market risk nested in the cash-settled contract (at the beginning) and then to minimize the terminal penalty. 

In Appendix \ref{additional} we show the indifference fees and the optimal trading strategies for the two contracts at $t=0, \ 0.5, \ 0.95$ as a function of the inventory $q$ and asset price $S$. As expected Figures \ref{fig:phys}-\ref{fig:TRS} show that the optimal fee increases in the asset price and decreases in the inventory, the second effect being very small. Approaching maturity, the trading strategy becomes more aggressive. In the case of physical delivery, the agent increases the trading speed $v$ in order to bring the inventory closer to the target level $N$, potentially reaching the upper bound $C$ on the trading startegy when the inventory is low. Conversely, in the case of cash settlement, as maturity approaches the trading activity switches from positive to negative, and can reach the lower bound $-C$ when the inventory is close to $N$.

\begin{figure}[t]
    \centering
    \includegraphics[width=0.9\linewidth]{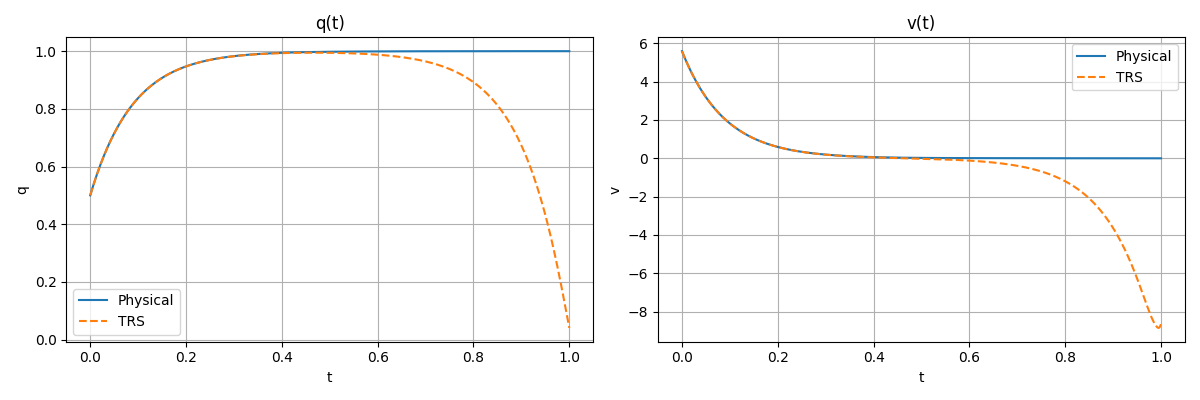}
    \caption{Inventory $q$ (left) and optimal strategy $v$ (right) for the physical delivery and TRS contracts. }
    \label{fig:simPhys_TRS}
\end{figure}

\begin{figure}[t]
    \centering
    \includegraphics[width=0.9\linewidth]{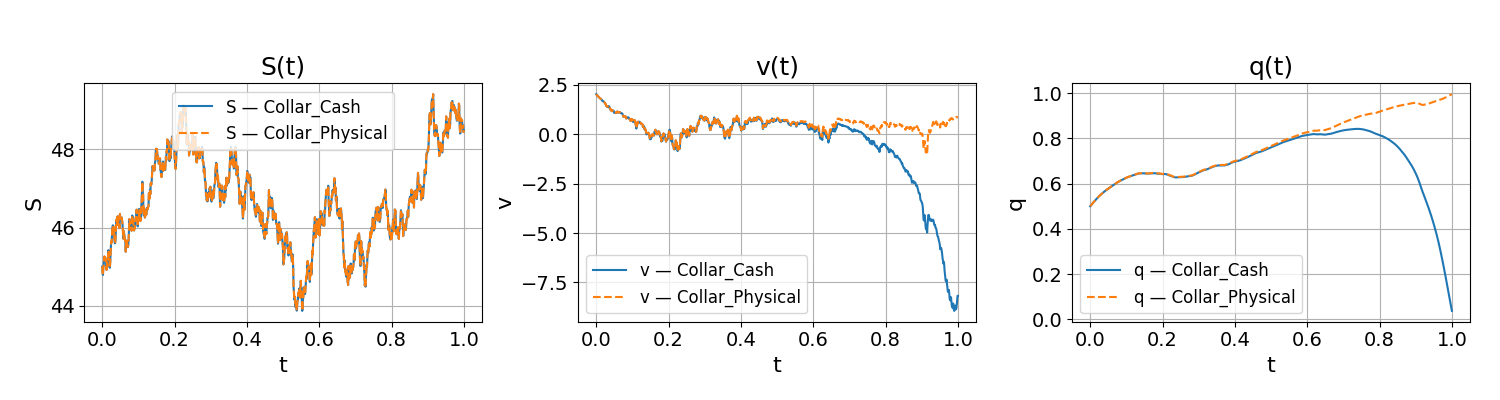}
    \caption{Simulations of the asset price $S$ (left), optimal strategy $v$ (middle) and inventory $q$ (right) for the collar contracts (physical delivery and cash settlement). The tests were conducted by simulating the same Brownian motion. 
    }\label{fig:simPhys_Collar}
\end{figure}

Figure~\ref{fig:simPhys_Collar} illustrates the results for the collar-type contract (both physical delivery and cash settlement) for a representative simulation of the asset price. In this setting, the trading strategy is sensitive to the evolution of $S$. The figure reports the dynamics of the stock price, the associated optimal trading strategy, and the resulting inventory process.
The trading strategies corresponding to collar contracts resemble those obtained for linear contracts, although they are generally less aggressive. As in the linear case, the cash-settled and physical delivery strategies are similar at the beginning of the trading horizon.
Figures~\ref{fig:col} and~\ref{fig:col2} in Appendix~\ref{additional} display the indifference fees and the optimal trading strategies as a function of the inventory and stock price at $t = 0,\, 0.5,\, 0.95$. At the beginning of the trading horizon ($t$ close to $0$), the broker tends to hold the underlying asset when the price lies within the interval $[K_1, K_2]$, i.e., when the contract is in the money and therefore the broker aims to hedge equity risk. Conversely, when the price approaches the barriers or moves outside this interval (i.e., the contract is out of the money), the broker tends to sell the underlying asset.
When the collar is out of the money, the three-dimensional surface representing the optimal trading strategy becomes smoother and exhibits an overall S-shaped profile. These figures highlight the importance of the hedging component at the beginning of the trading horizon: regardless of whether the contract involves physical or cash settlement, the broker holds the underlying asset when the collar is in the money. Moreover, the fee is S-shaped as a function of the stock price, and the corresponding surface flattens when the contract is out of the money (i.e., for $S < K_1 = 40$ or $S > K_2 = 50$).

In contrast with the findings of \cite{KYLE2,PIRRO}, Figure~\ref{fig:simPhys_TRS} shows that there is no evidence of manipulation (i.e., alternating buy and sell trades) in the case of a linear contract with physical delivery. By contrast, in the case of a cash-settled contract, the broker initially buys and subsequently sells the stock: the broker first hedges the equity risk and then liquidates the position as maturity approaches. This is due to the penalizing cost at maturity on stocks and is consistent with \cite{HO-NA} suggesting that physical settlement is less exposed to manipulation. As illustrated in Figure~\ref{fig:simPhys_Collar}, the nonlinearity of a collar-type contract leads to a more complex trading strategy, which depends on the evolution of the underlying asset price, see also Figures \ref{Q2} and \ref{Q3} in the Appendix.

\begin{table}[tp]
	\centering
    \caption{Fees of the four contracts.}
    \label{tab:IP}
	\begin{tabular}{|c|c|c|c|c|}
		\hline
		Model & physical delivery & TRS  & Collar (physical) & Collar (cash)\\
		\hline
		Price & $45.0029$ & $45.0130$  & $45.0042$ & $45.0078$ \\
		\hline
	\end{tabular}
\end{table}
\begin{table}[tp]
	\centering
    \caption{Expected payoffs of the four contracts.}
	\begin{tabular}{|c|c|}
		\hline
		Model &  $\mathbb{E}[Y(T)|X(0)=X_0-q_0S_0+\mathcal{P}(0,q_0,S_0)]-X_0$ \\
		\hline
          Physical delivery & -0.0137\\
          TRS &  0.0530\\
          Collar(physical) &  -0.0101\\
          Collar(cash) &  0.0527\\
        \hline
	\end{tabular}
    \label{STATARB}
\end{table}

In Table~\ref{tab:IP} we show the indifference fees of the four contracts. Confirming \cite{GUE17}, cash-settled contracts are more expensive than  physical delivery contracts.
The rationale of this result is that a cash-settled contract entails more trading activity and, therefore, higher trading costs.
 
 In Table \ref{STATARB} we compute the expected payoff assocaited with the contract:
 $$
 \mathbb{E}[Y(T)|X(0)=X_0-q_0S_0+\mathcal{P}(0,q_0,S_0)]-X_0,
 $$ 
 assuming $X_0=q_0S_0=22.5$, that is, after paying $q_0 S_0$ to acquire the shares, the broker's cash position at time $t=0$ is only due to the contract fee $\mathcal{P}(0,q_0,S_0)$. A positive expected payoff shows that, upon entering the contract and following the optimal strategy, the broker achieves a positive expected profit. 
 Such a result can be interpreted as statistical arbitrage opportunity, i.e., positive expected profit generated by the trading strategy associated with the contract, without additional capital beyond entering the contract. As shown in the table, a statistical arbitrage only arises for cash-settled contracts, and not for the physically delivered ones.

\subsection{Sensitivity analysis}

The analysis significantly depends on the model parameters. In what follows, we provide a sensitivity analysis on trading strategy, inventory and contract fee. For space considerations, all figures and tables are reported in Appendix~\ref{additionalSA}.

In Figure~\ref{fig:sweep_r}, we consider a positive interest rate, while the risky asset has zero drift. For all contracts, at the beginning of the trading horizon the broker sells the underlying asset, taking a short position in order to benefit from the positive risk-free rate. As maturity approaches, the broker gradually buys back the underlying asset to meet the terminal inventory constraint ($0$ in the case of cash settlement and to $1$ in the case of physical delivery).
Due to the positive interest rate, all contracts are cheaper compared to the case of a zero risk-free rate; see Table~\ref{fees}. The intuition is that the possibility of earning a positive return on the bank account reduces the compensation required by the broker.
Moreover, fees for cash-settled contracts become lower than those for physically delivered contracts. This effect is driven by the broker's trading behavior: initially, the broker sells the underlying asset and invests in the risk-free asset; as maturity approaches, the broker must repurchase the underlying asset to satisfy the terminal constraint, and the quantity to be acquired is significantly larger in the case of physical delivery. This results in higher trading costs for physically delivered contracts.

In Figure~\ref{fig:sweep_mu}, we consider a non-zero drift of the risky asset, while the risk-free rate is set to zero. When $\mu > 0$, the broker initially acquires the underlying asset in order to benefit from the positive price drift. The inventory reaches a level higher than $N$. As maturity approaches, the broker gradually sells the asset to meet the terminal inventory target (equal to $0$ or $1$). The opposite behavior is observed when $\mu < 0$. In this case, the broker initially takes a short equity position and later buys back the asset as maturity approaches. 
The presence of a non-zero drift reduces the fees of the contracts compared to the zero-drift case, as it provides additional gain from trading stocks.
The comparison between fees under physical delivery and cash settlement depends on the sign of $\mu$. When $\mu < 0$, the broker sells the asset at the beginning of the trading period and repurchases it close to maturity to meet the terminal constraint. Since this constraint is more stringent under physical delivery ($1$ instead of $0$), the corresponding fee is higher than in the cash-settled case. The opposite ordering is observed when $\mu > 0$.

In contrast with the benchmark case $\mu = r = 0$, we now observe that when either the drift or the risk-free rate is non-zero, manipulation (i.e., alternating buy and sell trades) may emerge also in the case of a linear contract with physical delivery, as the broker is incentivized to exploit favorable market conditions.

In Figure~\ref{fig:sweep_sigma}, we analyze the effect of changes in the volatility $\sigma$ of the underlying asset. The impact depends on the type of contract. 
For linear contracts (both cash-settled and physically delivered), higher volatility induces the broker to hold a larger inventory at the beginning of the trading horizon, in order to hedge market risk. In contrast, for collar contracts, an increase in $\sigma$ leads the broker to hold a smaller inventory. This difference arises because collar contracts impose both lower and upper bounds on the payoff, thereby reducing the broker’s exposure to market risk.
In Figure~\ref{fig: physical vs TRS under sigma}, we further investigate the optimal inventory for linear contracts as $\sigma$ increases. When volatility is high, the broker’s inventory is similar for physical delivery and cash-settled contracts at the beginning of the trading horizon, and then diverges as maturity approaches to meet the different terminal constraints. This reflects the fact that, under high volatility, hedging motives dominate initially, while liquidation requirements become relevant only closer to maturity.
By contrast, when volatility is low, the inventory dynamics differ from the outset. In particular, in the cash-settled case, the broker starts reducing the inventory from the beginning, as the hedging demand is limited.

In Figure \ref{fig:sweep_gamma} we consider the effect of varying the risk aversion coefficient. When $\gamma$ increases, the inventory increases, the rationale being that the broker takes care more carefully of market risk. 
As expected, as $\sigma$ or $\gamma$ increases, the fees of all contracts increase, see Table \ref{fees}. This can be explained by the fact that the broker faces a higher market risk and is more risk averse and, therefore, asks for a higher remuneration.

As the market impact cost $l$ increases, the broker trades at a slower rate and, in the case of cash-settled contracts, holds a lower inventory in order to reduce liquidation costs, see Figure~\ref{fig:sweep_l}. The permanent impact parameter $b$ plays a negligible effect, the corresponding results are therefore not reported.
As expected, increasing the terminal penalty parameter $\alpha$ does not affect the trading strategy at the beginning of the trading horizon. However, as maturity approaches, we observe a faster convergence towards the terminal inventory target ($0$ or $1$, depending on the settlement type) in order to avoid final liquidation costs, see Figure~\ref{fig:sweep_alpha}.
For all contracts, the indifference fee increases with the market impact parameter $l$, see Table~\ref{fees}. A similar effect is observed for the parameter $\alpha$, as a larger penalty increases the cost associated with failing to meet the terminal inventory target at maturity.

Notice that the absence of manipulation in the case of physical delivery contracts is robust to changes in volatility, risk aversion, impact costs, and the terminal penalty parameter.

\section{Extensions}
\label{EXT}
In this section we consider two extensions: the approval of the deal by the regulatory authority, see Section \ref{REG}, and agreements based on TWAP, see Section \ref{TWAP}.

\subsection{Regulatory approval}
\label{REG}
Let us consider a physical delivery contract with the possibility that at some time $\tau \in [0,T]$ the counterpart is forced to switch to cash settlement from a physical delivered agreement. We relate this event to regulatory approval that may induce the counterpart not to go on with a physical delivery contract. 






More precisely, the acquisition of the $N$ shares by the counterpart is subject to regulatory approval. The regulatory outcome is represented by a binary random variable $R\in\{0,1\}$, where $R=1$ corresponds to approval and $R=0$ to rejection. We assume that $R$ is $\mathcal F_\tau$-measurable, and that its value becomes observable only at time $\tau\in(0,T)$.
Therefore, the stochastic control problem can be divided into two sub-periods: $[0,\tau]$ and $(\tau,T]$. In analogy with the admissible set introduced in the main model, we define
\begin{equation*}
\mathcal A_1:=\left\{v_1:\Omega\times[0,\tau]\to\mathbb R \ \text{is } \mathcal F\text{-progressively measurable and } |v_1(t)|\le C \text{ on } [0,\tau]\times\Omega\right\},
\end{equation*}
and
\begin{equation*}
\mathcal A_2:=\left\{v_2:\Omega\times[\tau,T]\to\mathbb R \ \text{is } \mathcal F\text{-progressively measurable and } |v_2(t)|\le C \text{ on } [\tau,T]\times\Omega\right\}.
\end{equation*}
Here, $v_1$ represents the trading strategy adopted before the regulatory decision is revealed, whereas $v_2$ is the trading strategy adopted after time $\tau$, when the broker can condition on the observed value of $R$.

Prior to observing the regulatory decision, the broker assigns a subjective probability $p\in[0,1]$ to the event $R=1$ (regulatory approval). Since the outcome is revealed only at time $\tau\in(0,T)$, we decompose the problem into a post-decision and a pre-decision stage.

On the interval $[\tau,T]$, the continuation value depends on the realized outcome. We define
\begin{align*}
\mathcal J^{R=1}(t,x,q,S)
&=\sup_{v_2\in\mathcal A_2}
\mathbb E\!\left[-\exp\!\left(-\gamma Y^{\mathrm{Physical}}(T)\right)\right],\\
\mathcal J^{R=0}(t,x,q,S)
&=\sup_{v_2\in\mathcal A_2}
\mathbb E\!\left[-\exp\!\left(-\gamma Y^{\mathrm{Cash}}(T)\right)\right],
\end{align*}
for $t\in[\tau,T]$, the payoffs being defined in Equations (\ref{payoff_physical}) and (\ref{cash settlement}). 
From the perspective of any time $t<\tau$, the continuation value at time $\tau$ is evaluated in the expectation as
\begin{equation*}
\mathbb E\big[\mathcal J^\tau(\tau,x,q,S)\big]
=
p\,\mathcal J^{R=1}(\tau,x,q,S)
+
(1-p)\,\mathcal J^{R=0}(\tau,x,q,S).
\end{equation*}
Therefore, the pre-decision value function is given by
\begin{equation*}
\mathcal J^{\mathrm{pre}}(t,x,q,S)
=
\sup_{v_1\in\mathcal A_1}
\mathbb E\!\left[
p\,\mathcal J^{R=1}(\tau,X(\tau),Q(\tau),S(\tau))
+
(1-p)\,\mathcal J^{R=0}(\tau,X(\tau),Q(\tau),S(\tau))
\right],
\end{equation*}
for $t\in[0,\tau)$.

\begin{table}[t]
	\centering
    \caption{Fees of the contract for different values of $p$. }\label{Ext1}
	\begin{tabular}{|c|c|c|c|c|c|}
		\hline
$p$&	$p=0$(Cash)&	 $p=0.2$ & $p=0.5$ & $p=0.8$ &$p=1$(Physical)\\
		\hline
Price($\tau=0.5$, $\sigma=1$)&		   $45.0020$ & $45.0018$ & $45.0014$ &
$45.0010$ &$45.0007$ \\
\hline
Price($\tau=0.5$, $\sigma=5$)&		   $45.0130$ & $45.0112$ & $45.0081$ &
$45.0050$ &$45.0029$ \\
        \hline
	\end{tabular}
\end{table}
\begin{figure}[t]
    \centering
    \includegraphics[width=0.9\linewidth]{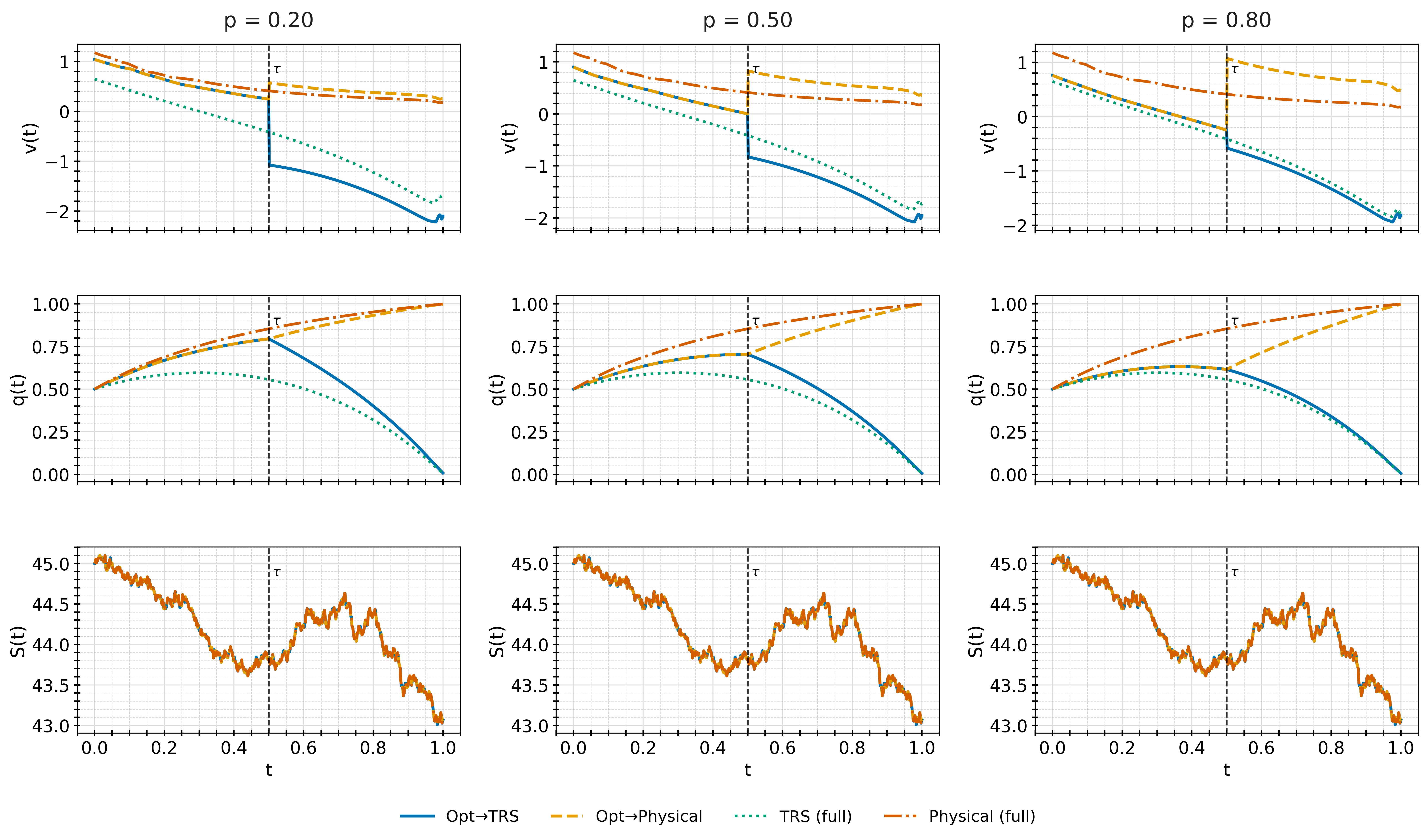}
    \caption{Simulation under different value of $p$ with $\tau=0.5$. $\sigma=1$.}
    \label{fig:placeholder4}
\end{figure}

Table \ref{Ext1} reports the fees as a function of $p$, setting $\tau=0.5=T/2$ and $\sigma=1$ and $5$, all the other parameters are as in Table \ref{parame0}. Notice that the fees correspond to the ones in Table \ref{tab:IP} for $p\in\{0,1\}$ and $\sigma=5$. As expected, increasing $p$, the fees decrease, reflecting the higher likelihood of ending up in the physical delivery regime, and therefore the less probable necessity to face high liquidation costs for the cash settlement cotnract.

In Figure \ref{fig:placeholder4} we report the trading strategies for different value of $p$ and $\sigma=1$ considering both the switch to  a cash-settled contract ($Opt \rightarrow TRS$) and to a physically delivered contract ($Opt \rightarrow Physical$). Notice that for $t \in [0, \tau]$ the strategy lies between the one obtained in case of physical delivery and cash-settled contract (TRS (full) and Physical (full), respectively). Results for $\sigma=5$ are not reported, since for large $\sigma$ the strategies for physical and TRS are very similar for $t<0.6$ (see Figure~\ref{fig:simPhys_TRS}), making  indistinguishable
the corresponding trajectories. 

\subsection{TWAP as a benchmark}
\label{TWAP}
Building on \cite{GUE15,JAI}, we also consider the case of a contract based on the TWAP, which is used in practice as a benchmark for delegated trading contracts, see \cite{BALD,BALD2,BAN,BIC,LARS}.

We introduce the TWAP process $A=(A_t)_{t\in[0,T]}$ defined as
\[
A_t=\frac{1}{t}\int_0^t S(u)\,du, \quad t>0,
\qquad A_0=S(0),
\]
and we modify the linear payoffs in Section \ref{MOD} setting the contract payoffs as follows:
\begin{equation*}
    Y^{TWAP}(T)=X(T)+Q(T)S(T)+N(A(T)-S(T))-L(Q(T)),
\end{equation*}
where $L(Q(T))=L(Q(T),0)$ in case of cash settlement and $L(Q(T))=L(Q(T),N)$ in case of physical delivery.

Therefore, we define the value function as follows
\begin{equation*}
    \mathcal{J}(t,x,q,S,A)=\sup_{v\in\mathcal{A}}\mathbb{E}\left[-\exp({-\gamma Y^{TWAP}(T)})|X(t)=x, Q(t)=q, S(t)=S, A(t)=A\right], t\leq T,
\end{equation*}
where we recall that $\gamma$ is the risk-aversion coefficient and $\mathcal{A}$ is the set of admissible strategies. Using the same arguments as in Section \ref{PROB}, the value function $\mathcal{J}(t,x,q,S,A)$ satisfies
\begin{equation*}
    \partial_t\mathcal{J}+\mu\partial_S\mathcal{J}+\frac{1}{2}\sigma^2\partial_{SS}\mathcal{J}+\frac{S-A}{t}\partial_A\mathcal{J}+\sup_{|v|\leq C}\{bv\partial_S\mathcal{J}+(rx-(S+lv)v)\partial_x\mathcal{J}+v\partial_q\mathcal{J}\}=0
\end{equation*}
with the terminal condition
\begin{equation*}
    \mathcal{J}(T,x,q,S,A)=-\exp({-\gamma}(x+qS+N(A-S)-L(q))).
\end{equation*}

This extension introduces an additional state variable through the TWAP process $A$. 
However, the dynamics remain of controlled diffusion type with continuous coefficients, and the payoff function is Lipschitz continuous. Therefore, the arguments developed in Section \ref{PROB} can be extended to this setting, yielding existence and uniqueness of the solution.

Equations (\ref{indifference_price}) and (\ref{transformation}) imply that
\begin{equation*}
    \mathcal{J}(t,x,q,S,A)=-\exp[-\gamma e^{r(T-t)}(x+qS-\mathcal{P}(t,q,S,A))].
\end{equation*}
By direct computation, the utility indifference fee $\mathcal{P}(t,q,S,A)$ satisfies
\begin{align*}
    -\partial_t\mathcal{P}-\frac{S-A}{t}\partial_A\mathcal{P}+r\mathcal{P}+(\mu-rS)q-\mu\partial_S\mathcal{P}-\frac{1}{2}\sigma^2\partial_{SS}\mathcal{P}-\frac{1}{2}\sigma^2\gamma e^{r(T-t)}(q-\partial_S\mathcal{P})^2\\+\sup_{|v|\leq C}\{-lv^2+(bq-b\partial_S\mathcal{P}-\partial_q\mathcal{P})v\}=0,
\end{align*}
with terminal condition
\begin{equation*}
    \mathcal{P}(T,q,S,A)=-N(A-S)+L(q).
\end{equation*}
To simplify the problem, we consider the case $r=0$ and apply the following transformation
\begin{equation*}
    \mathcal{P}(t,q,S,A)=N\frac{t}{T}\left(S-A\right)+U(t,q,S).
\end{equation*}
Then, we have
\begin{align*}
    &\partial_t\mathcal{P}=-\frac{N}{T}A+\frac{N}{T}S+\partial_tU,\quad \partial_S\mathcal{P}=N\frac{t}{T}+\partial_SU,\\
    &\partial_q\mathcal{P}=\partial_q U,\quad \partial_{SS}\mathcal{P}=\partial_{SS}U,\quad\partial_A\mathcal{P}=-N\frac{t}{T},
\end{align*}
and therefore
\begin{align*}
    -\partial_tU+\mu q-\mu\left(N\frac{t}{T}+\partial_SU\right)-\frac{1}{2}\sigma^2\partial_{SS}U-\frac{1}{2}\sigma^2\gamma \left(q-N\frac{t}{T}-\partial_SU\right)^2\\+\sup_{|v|\leq C}\left\{-lv^2+\left(bq-b\left(N\frac{t}{T}+\partial_SU\right)-\partial_q U\right)v\right\}=0,
\end{align*}
with terminal condition $U(T,q,S)=L(q)$. Notice that this transformation removes the dependence on the auxiliary state variable $A$, reducing the problem to a PDE in the variables $(t,q,S)$. 

\begin{table}[tp]
	\centering
    \caption{Fees of the two TWAP contracts assuming $q_0=0.5, S_0=45,\sigma=5$.}     \label{fees_TWAP}
	\begin{tabular}{|c|c|c|c|c|}
		\hline
		Model & TWAP(Physical) & TWAP(Cash)\\
		\hline
		Price & $0.4997$ & $0.4999$  \\
		\hline
	\end{tabular}
\end{table}

\begin{table}[tp]
	\centering
    \caption{Expected payoffs of the two TWAP cotnracts assuming $q_0=0.5, S_0=45,\sigma=5$.}\label{statarbTWAP}
	\begin{tabular}{|c|c|}
		\hline
		Model &  $\mathbb{E}[Y(T)|X(0)=X_0-q_0S_0+\mathcal{P}(0,q_0,S_0)]-X_0$ \\
		\hline
          TWAP(Physical) & 0.5117 \\
          TWAP(Cash) & 0.5631 \\
        \hline
	\end{tabular}
\end{table}
Fees are reported in Table \ref{fees_TWAP}. 
They are smaller than those of all previously considered contracts. 
This can be explained by comparing the payoffs. For instance, in the physical-delivery case, we have
\begin{equation*}
    Y^{{TWAP}}(T)=Y^{{Physical}}(T)+N A(T).
\end{equation*}
Since $A(T)$ is an average price, the additional term $N A(T)$ increases the broker's terminal payoff, thereby reducing the corresponding indifference fee. Table \ref{statarbTWAP} indicates a higher likelihood of statistical arbitrage opportunities compared to standard linear contracts, see Table \ref{STATARB}. Statistical arbitrage arises both in case of a cash-settled and physically-delivered contract.

In Figure \ref{fig: TWAP under sigma}, we report the optimal strategies and inventory over the interval $[0,T]$ for physical delivery, TRS, TWAP with physical settlement and cash settlement, considering different values of $\sigma$. All the other parameters are as in Table \ref{tab:params}, and $S_0=45.0$ and $q_0=0.5$.
We observe that, for low volatility, the optimal trading strategies under TWAP-based compensation are similar to those of the corresponding linear contracts. In contrast, when volatility is high, the optimal trading strategies under TWAP compensation differ significantly from the linear ones, they become more similar to each other at the beginning of the trading period (sale of the stock, small inventory) and diverge only when maturity approaches. 

When volatility is low, a TWAP compensation scheme renders a less (more) aggressive trading strategy at the beginning of the trading horizon in case of physical delivery (cash-settled contract), the reverse is observed when maturity approaches. The inventory for TWAP-contracts is always below the level observed for the linear contracts.

When volatility is high, the broker faces a high risk concerning the TWAP compensation part, this leads to a reduction of the inventory with a sale of the stock at the beginning of the tarding period. As maturity approaches, the average price becomes a sounder reference (less risk on the TWAP component) and the broker becomes more aggressive. Notice that a high volatility creates space for price manipulation both for cash-settled and physically delivered contracts.

\begin{figure}[t]
    \centering
    \includegraphics[width=0.9\linewidth]{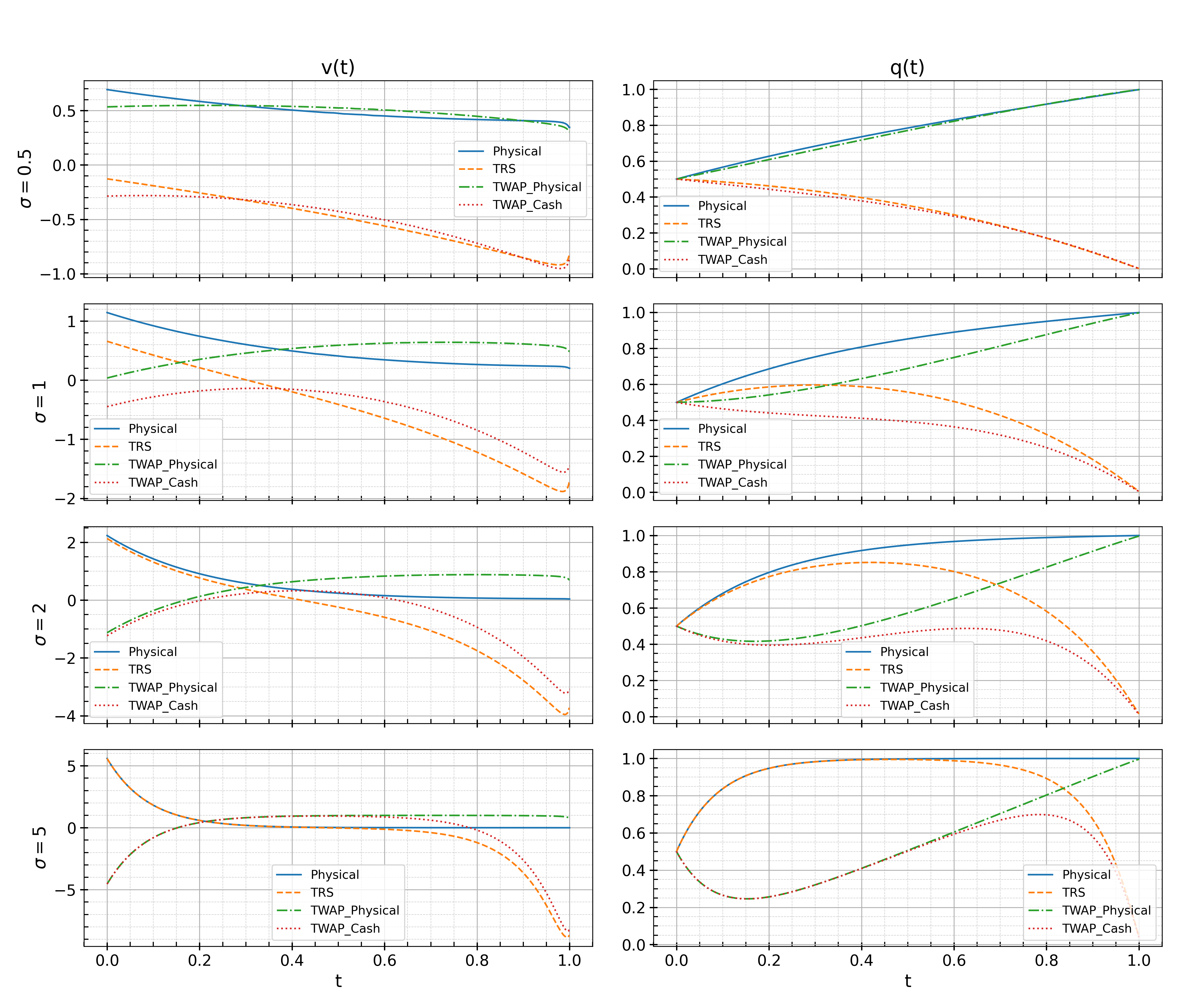}
    \caption{Comparison of the simulation of $v(t)$ and $Q(t)$ for the physical, TRS contracts, TWAP with physical and TWAP with cash contracts under different values of $\sigma$.}
    \label{fig: TWAP under sigma}
\end{figure}

\section{Conclusions}
In this paper we investigated the optimal execution of contracts that are used in merger\&acquisition deals. We considered cash-settled and physical delivered contracts between a broker and a counterpart. The contracts are linear (Total Returns Swaps), nonlinear (collar contracts) or Asian type (TWAP-based compensation). We derived the optimal execution strategy and the optimal fee through indifference utility arguments allowing for linear market effects. 

Two main results are provided. Cash-settled linear contracts, compared to physical delivery, are more expensive, are more exposed to manipulation (change in the sign of trades) and statistical arbitrages by the broker. This is due to the different terminal constraint on the stocks. 
Nonlinear contracts or TWAP-based contracts are structurally exposed to manipulation because hedging the stock-cash flow position strongly depends on price evolution. 

Further investigation calls for an analysis of these contracts in a game theoretic setting between the broker and the target company itself or another broker who can compete on the acquisition of the company.

\newpage

\appendix
\renewcommand{\thesection}{\Alph{section}}

\renewcommand{\thefigure}{\thesection\arabic{figure}}
\renewcommand{\thetable}{\thesection\arabic{table}}
\renewcommand{\thetheorem}{\thesection\arabic{theorem}}

\setcounter{figure}{0}
\setcounter{table}{0}
\setcounter{theorem}{0}

\section{Closed form solutions for $\mu=0$ and $r=0$}
\label{CLOSE}
Assuming $\mu=r=0$, we can obtain a closed form solution of the value function and of the indifference fee in case of a linear contract.

\begin{proposition}
\label{CFPhys}
    (physical delivery)
Assuming $\mu=r=0$, the value function for a physical delivery with payoff as in (\ref{payoff_physical}) is given by
\begin{equation*}
    \mathcal{J}_1^*(t,x,q,S)=-\exp(-\gamma (x+qS-NS-h(q,t))),
\end{equation*}
and the optimal strategy is
\begin{equation*}
    v^{*}_1(t,q)=\min\left\{C; \max\left\{-C;\frac{bq-bN-\partial_qh(t,q)}{2l}\right\}\right\},
\end{equation*}
where
\begin{equation*}
    h(t,q)=\left(a\frac{1+\xi e^{-\frac{2a}{l}(T-t)}}{1-\xi e^{-\frac{2a}{l}(T-t)}}+\frac{1}{2}b\right)(q-N)^2,
    \end{equation*} 
    with
    \begin{align*}
        a=\sqrt{\frac{l}{2}\sigma^2\gamma},\ \ \ 
        \xi=\frac{\alpha-\frac{b}{2}-a}{\alpha-\frac{b}{2}+a}.
    \end{align*}
\end{proposition}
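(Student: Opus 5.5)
The plan is to reduce the HJB equation of Proposition~\ref{viscosity solution} to an ODE via the ansatz in the statement, and then to check that the resulting smooth function is the value function. By \eqref{fromJto_P}, the form $\mathcal{J}_1^*=-\exp(-\gamma(x+qS-NS-h))$ with $r=0$ is the same as writing the indifference fee as
\[
\mathcal{P}(t,q,S)=NS+h(t,q).
\]
So it suffices to solve the fee equation \eqref{indifferencepriceHJB} with $\mu=r=0$. Its terminal condition is $\mathcal{P}(T,q,S)=\Pi(S)+L(q)=NS+\alpha(q-N)^2$, which gives $h(T,q)=\alpha(q-N)^2$.

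\textbf{Step 1: reduction to a PDE for $h$.} Substituting $\partial_S\mathcal{P}=N$, $\partial_{SS}\mathcal{P}=0$ and $\partial_q\mathcal{P}=\partial_q h$, all $S$-dependence cancels. We are left with
\[
-\partial_t h-\tfrac12\sigma^2\gamma (q-N)^2+\sup_{|v|\le C}\big\{-lv^2+(b(q-N)-\partial_q h)v\big\}=0 .
\]
The maximiser is the clipped value $\min\{C,\max\{-C,(b(q-N)-\partial_q h)/(2l)\}\}$. This is exactly the expression for $v_1^*$ in the statement, read off from the formula following \eqref{indifferencepriceHJB}.

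\textbf{Step 2: Riccati equation.} In the region where the constraint is inactive, the supremum equals $(b(q-N)-\partial_q h)^2/(4l)$. Insert the quadratic ansatz $h=g(t)(q-N)^2$ and set $k=g-b/2$. Matching the coefficients of $(q-N)^2$ gives the scalar Riccati equation
\[
k'(t)=\frac{k^2-a^2}{l},\qquad a^2=\frac{l\sigma^2\gamma}{2},\qquad k(T)=\alpha-\frac b2 .
\]
Separating variables and using partial fractions, $\frac{k-a}{k+a}$ evolves as $e^{2a(t-T)/l}$ times its terminal value $\xi=\frac{\alpha-b/2-a}{\alpha-b/2+a}$. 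Inverting this relation yields
\[
k(t)=a\,\frac{1+\xi e^{-\frac{2a}{l}(T-t)}}{1-\xi e^{-\frac{2a}{l}(T-t)}},
\]
so $g=k+b/2$ is the coefficient in the statement. For global existence on $[0,T]$ I would note that $\alpha>0$ and $b$ small, as in Table~\ref{tab:params}, give $\alpha-b/2+a>0$. Then $|\xi|<1$, and the denominator never vanishes.

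\textbf{Step 3: verification.} The candidate $\mathcal{J}_1^*$ is smooth, so I would show it is a classical, hence viscosity, solution of \eqref{HJB} with the correct terminal condition. The uniqueness part of Proposition~\ref{viscosity solution} then identifies it with $\mathcal{J}_1$. Alternatively, one can run a direct verification argument with It\^o's formula applied to $\mathcal{J}_1^*(t,X,Q,S)$ along admissible controls. This uses boundedness of $v$ and Gaussian moments of $S$ to kill the local martingale part.

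\textbf{Main obstacle.} The main obstacle is the speed bound $|v|\le C$. The quadratic ansatz solves the equation exactly only where the unconstrained maximiser $(b-2g(t))(q-N)/(2l)$ lies in $[-C,C]$. Outside that region the clipped Hamiltonian is linear in $\partial_q h$, and the quadratic $h$ is no longer an exact solution. I would first try to prove the statement exactly on the region $\{|(b-2g(t))(q-N)|\le 2lC\}$, which contains the relevant inventories for the parameters considered. On this region the unconstrained and constrained problems coincide, and the feedback stays admissible along optimal paths starting there. This holds because $|q-N|$ is non-increasing under the feedback, since $b-2g<0$ whenever $k>0$. The clipped formula for $v_1^*$ is then the natural statement of the optimal control. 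Making the global claim rigorous outside this region would require handling the truncated Hamiltonian, which I expect to be the delicate point.
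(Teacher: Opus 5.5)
Your Steps 1--2 reproduce the paper's proof. The paper inserts the ansatz into \eqref{HJB} rather than the fee equation, which makes no difference. It replaces the supremum over $|v|\le C$ by the unconstrained value $(b(q-N)-\partial_q h)^2/(4l)$ and solves the Riccati equation via $\bar h=\theta+b/2$. It then stops: it does no verification and never returns to the speed bound. The obstacle you flag is genuine, and it splits in two. The value formula cannot be extended beyond your region, while the clipped control can, by an argument your plan does not contain.

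Write $y=q-N$ and $k=g-b/2$, which is positive on $[0,T]$ when $\alpha>b/2$, and let $D=\{k(t)|y|\le lC\}$ be your region. The constrained $h$-equation is exactly the HJB equation of a deterministic convex problem: minimize $\int_t^T(lv^2+\frac{\gamma\sigma^2}{2}y^2)\,du+(\alpha-\frac b2)y_T^2$ over $\dot y=v$, $|v|\le C$, with $h$ equal to its value plus $\frac b2 y_t^2$.

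Off $D$ the value formula is false. The unconstrained optimizer is unique by strict convexity and violates the bound near the initial time, so the true $h$ is strictly larger than the quadratic one. At the level of the equation, with $p=-2ky$, one has $\sup_{|v|\le C}\{-lv^2+pv\}=C|p|-lC^2<p^2/(4l)$ there. No treatment of the truncated Hamiltonian can rescue the formula off $D$.

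The clipped feedback, however, is optimal everywhere. From outside $D$ it drives the path at full speed $\pm C$ until it enters $D$, and afterwards it is the unconstrained feedback. Its costate satisfies $\dot\lambda=-\gamma\sigma^2 y$ and equals $2ky$ inside $D$. It is monotone while $y$ keeps its sign, so $\mp\lambda\ge 2lC$ before entry. Pontryagin's conditions therefore hold, and convexity makes them sufficient.

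Your restricted argument also needs two repairs. First, invariance of $D$ does not follow from $|Q-N|$ being non-increasing. The band $|y|\le lC/k(t)$ shrinks as $k$ grows, and that is what happens for the paper's parameters: $k(T)=\alpha-b/2\approx 0.2\gg a\approx 0.011$, so $k'=(k^2-a^2)/l>0$. Use instead the identity $\frac{d}{ds}[k(s)y_s]=-\frac{a^2}{l}y_s$ along $v=-ky/l$. With $k>0$ it shows that $|v(s)|$ is non-increasing.

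Second, the route ``classical solution of \eqref{HJB} plus uniqueness'' is unavailable. The candidate does not solve \eqref{HJB} off $D$, and uniqueness does not localize to a subregion. Your It\^o route does work:
\begin{itemize}
\item The candidate solves the HJB with the supremum over all $v\in\mathbb{R}$, so its drift along any control with $|v|\le C$ is nonpositive.
\item The stochastic integral is a true martingale, because $X+QS-NS$ has the bounded diffusion coefficient $\sigma(Q-N)$.
\end{itemize}
This gives $\mathcal{J}_1\le\mathcal{J}_1^*$ everywhere, with equality on $D$ because the feedback is admissible there.

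Finally, $\alpha-b/2+a>0$ yields only $\xi<1$, which is enough to keep the denominator nonzero. Both $|\xi|<1$ and $k>0$ require $\alpha>b/2$.
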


\begin{proof}
    Assuming $\mu=r=0$, the HJB equation (\ref{HJB}) reduces to the following
\begin{equation}\label{HJB1_reduce}
    \partial_t\mathcal{J}_1+\frac{1}{2}\sigma^2\partial_{SS}\mathcal{J}_1+\sup_{|v|\leq C}\{bv\partial_S\mathcal{J}_1-(S+lv)v\partial_x\mathcal{J}_1+v\partial_q\mathcal{J}_1\}=0,
\end{equation}
with the terminal condition
\begin{equation*}
    \mathcal{J}_1(T,x,q,S)=-\exp(-\gamma(x+qS-NS-L(q))).
\end{equation*}
Assuming $\partial_x\mathcal{J}_1$ positive, the (unconstrained) supremum in Equation (\ref{HJB1_reduce}) is attained at
\begin{equation*}
    v^*=\frac{b\partial_S\mathcal{J}_1+\partial_q\mathcal{J}_1-S\partial_x\mathcal{J}_1}{2l\partial_x\mathcal{J}_1}.
\end{equation*}
Then the value function satisfies
\begin{equation}\label{HJB1_reduce2}
    \partial_t\mathcal{J}_1+\frac{1}{2}\sigma^2\partial_{SS}\mathcal{J}_1+\frac{(b\partial_S\mathcal{J}_1+\partial_q\mathcal{J}_1-S\partial_x\mathcal{J}_1)^2}{4l\partial_x\mathcal{J}_1}=0.
\end{equation}
Given the terminal time condition, we assume the following ansatz for the value function:
\begin{equation*}
    \mathcal{J}_1(t,x,q,S)=-\exp(-\gamma (x+qS-NS-h(q,t))).
\end{equation*}
By direct computation, we obtain
\begin{equation}\label{HJB1_reduce3}
    -\partial_th-\frac{1}{2}\sigma^2\gamma(q-N)^2+\frac{(bq-bN-\partial_qh)^2}{4l}=0
\end{equation}
and $h(T,q) = L(q, N)=\alpha(q-N)^2$.

This equation can be solved via separation of variables. 
Assuming $h(t,q)=\bar{h}(t)(q-N)^2$, $\bar{h}(t)$ satisfies
\begin{equation*}
    -\partial_t\bar{h}(t)-\frac{1}{2}\sigma^2\gamma+\frac{(b-2\bar{h}(t))^2}{4l}=0,
\end{equation*}
with terminal condition $\bar{h}(T)=\alpha$. The resulting ODE is a Riccati equation, which admits an explicit solution. To proceed, we first set $\bar{h}(t)=\theta(t)+\frac{1}{2}b$, and we obtain
\begin{equation*}
    \frac{\partial_t\theta}{\theta^2-\frac{l}{2}\sigma^2\gamma}=\frac{1}{l},
\end{equation*}
and terminal condition $\theta(T)=\alpha-\frac{b}{2}$. Integrating both sides over the interval $[t, T]$, we obtain
\begin{equation*}
    \theta(t)=a\frac{1+\xi e^{-\frac{2a}{l}(T-t)}}{1-\xi e^{-\frac{2a}{l}(T-t)}},
\end{equation*}
with
\begin{align*}
    a=\sqrt{\frac{l}{2}\sigma^2\gamma},\ \ \ 
    \xi=\frac{\alpha-\frac{b}{2}-a}{\alpha-\frac{b}{2}+a}.
\end{align*}
\end{proof}

\begin{proposition}\label{CFTRS}(TRS)
Assuming $\mu=r=0$, the value function for the TRF with payoff as in (\ref{cash settlement}) is given by
\begin{equation*}
    \mathcal{J}_2^*(t,x,q,S)=-\exp(-\gamma (x+qS-NS-\hat{h}(q,t))),
\end{equation*}
and the optimal strategy is
\begin{equation*}
    v^{*}_2(t,q)=\min\left\{C; \max\left\{-C;\frac{bq-bN-\partial_q\hat{h}(t,q)}{2l}\right\}\right\},
\end{equation*}
where
\begin{align*}
    \hat{h}(t,q)&=\hat{h}_0(t)+\hat{h}_1(t)q+\hat{h}_2(t)q^2.\\
    \hat{h}_2(t)&=a\frac{1+\xi e^{-\frac{2a}{l}(T-t)}}{1-\xi e^{-\frac{2a}{l}(T-t)}}+\frac{1}{2}b,\\
    \hat{h}_1(t)&=\left(\frac{e^{\frac{a}{l}(T-t)}-\xi e^{-\frac{a}{l}(T-t)}}{1-\xi}-1\right)bN\\
    &\quad\quad-\sigma^2\gamma N\left(\frac{e^{\frac{a}{l}(T-t)}-\xi e^{-\frac{a}{l}(T-t)}}{2\frac{a}{l}\sqrt{\xi}}\right)\ln\frac{(e^{\frac{a}{l}(T-t)}-\sqrt{\xi})(1+\sqrt{\xi})}{(e^{\frac{a}{l}(T-t)}+\sqrt{\xi})(1-\sqrt{\xi})},\\
    \hat{h}_0(t)&=\int_t^{T}\left[\frac{1}{4l}\left(\hat{h}_1(t)+bN\right)^2-\frac{1}{2}\sigma^2\gamma N^2\right]ds,
\end{align*}
where $a$ and $\xi$ are defined as in Proposition \ref{CFPhys}.
\end{proposition}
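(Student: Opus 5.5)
We follow the route of the proof of Proposition~\ref{CFPhys}. With $\mu=r=0$, the HJB equation \eqref{HJB} again reduces to \eqref{HJB1_reduce}. The only change is the terminal condition, which now reads
\[
\mathcal J_2(T,x,q,S)=-\exp\big(-\gamma(x+qS-NS-\alpha q^2)\big),
\]
since $L(q,0)=\alpha q^2$. We therefore use the same ansatz $\mathcal J_2=-\exp(-\gamma(x+qS-NS-\hat h(t,q)))$. Substituting the derivatives gives $b\partial_S\mathcal J_2+\partial_q\mathcal J_2-S\partial_x\mathcal J_2=\gamma(bq-bN-\partial_q\hat h)(-\mathcal J_2)$. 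Plugging this into the unconstrained maximiser yields exactly the stated feedback $v_2^*$, clipped to $[-C,C]$. It also yields the analogue of \eqref{HJB1_reduce3}:
\[
-\partial_t\hat h-\tfrac12\sigma^2\gamma(q-N)^2+\frac{(bq-bN-\partial_q\hat h)^2}{4l}=0,\qquad \hat h(T,q)=\alpha q^2 .
\]
As in Proposition~\ref{CFPhys}, we verify the ansatz on the unconstrained equation. We then identify it with the value function through the uniqueness part of Proposition~\ref{viscosity solution}, or through a direct verification argument.

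Because the terminal datum is no longer centred at $N$, a pure $(q-N)^2$ ansatz fails, so we take the full quadratic $\hat h=\hat h_0(t)+\hat h_1(t)q+\hat h_2(t)q^2$. Matching the coefficients of $q^2$, $q$ and $1$ gives a triangular system:
\begin{align*}
&-\hat h_2'-\tfrac12\sigma^2\gamma+\tfrac{(b-2\hat h_2)^2}{4l}=0,\quad \hat h_2(T)=\alpha,\\
&-\hat h_1'+\sigma^2\gamma N-\tfrac{(b-2\hat h_2)(bN+\hat h_1)}{2l}=0,\quad \hat h_1(T)=0,\\
&-\hat h_0'-\tfrac12\sigma^2\gamma N^2+\tfrac{(bN+\hat h_1)^2}{4l}=0,\quad \hat h_0(T)=0.
\end{align*}
The first equation is literally the Riccati equation already solved for $\bar h$ in Proposition~\ref{CFPhys}, so $\hat h_2=\theta+\tfrac b2$ with the same $a$ and $\xi$. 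The third equation integrates directly to the stated formula for $\hat h_0$.

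The main work is the linear ODE for $\hat h_1$. Set $g=\hat h_1+bN$ and note that $b-2\hat h_2=-2\theta$. The equation then becomes $g'=\sigma^2\gamma N+\frac{\theta}{l}g$ with $g(T)=bN$. We solve it with the integrating factor $\Phi(t)=\exp\big(-\int_t^T\theta(s)/l\,ds\big)$. Writing $\tau=T-t$ and $\theta=a\frac{1+\xi e^{-2a\tau/l}}{1-\xi e^{-2a\tau/l}}$, the integral $\int\theta/l$ is a logarithm, and we get
\[
\Phi^{-1}(t)=\frac{e^{a\tau/l}-\xi e^{-a\tau/l}}{1-\xi}.
\]
The homogeneous part therefore contributes $bN\,\Phi^{-1}$, which gives the first term of $\hat h_1$ once $bN$ is subtracted. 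The particular part is $-\sigma^2\gamma N\,\Phi^{-1}(t)\int_t^T\Phi(s)\,ds$. Here we must compute $\int (1-\xi)/(e^{a\tau/l}-\xi e^{-a\tau/l})\,d\tau$. The substitution $y=e^{a\tau/l}$ turns it into $\int dy/(y^2-\xi)$, which for $\xi>0$ equals $\frac{1}{2\sqrt\xi}\ln\frac{y-\sqrt\xi}{y+\sqrt\xi}$. Evaluating between $y=1$ and $y=e^{a\tau/l}$ produces the stated logarithm and prefactor.

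We expect this last integral, together with the constant bookkeeping, to be the main obstacle. The factors of $1-\xi$ must cancel correctly, and when $\xi<0$ (that is, $\alpha-\tfrac b2<a$) the logarithm must be read via $\sqrt\xi\in i\mathbb R$, i.e.\ as an arctangent. We would check this carefully, and would also confirm the terminal condition $\hat h_1(T)=0$ by noting that the logarithm vanishes at $\tau=0$.
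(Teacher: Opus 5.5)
Your route is the paper's (same quadratic ansatz, same triangular ODE system, which you derive correctly). There is, however, a genuine error where you solve the $\hat h_1$-equation, and it is exactly this slip that lands you on the stated formula.

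For $g'=\sigma^2\gamma N+\frac{\theta}{l}g$, an integrating factor must satisfy $\Phi'=-\frac{\theta}{l}\Phi$. Your $\Phi(t)=\exp\big(-\int_t^T\theta(s)/l\,ds\big)$ satisfies $\Phi'=+\frac{\theta}{l}\Phi$, so $(\Phi g)'=\Phi\big(g'+\frac{\theta}{l}g\big)\neq\Phi\,\sigma^2\gamma N$. Equivalently, $\frac{d}{dt}\Phi^{-1}=-\frac{\theta}{l}\Phi^{-1}$, so $bN\Phi^{-1}$ does not solve $g'=\frac{\theta}{l}g$. The homogeneous solution is $bN\Phi(t)$, and the particular part is $-\sigma^2\gamma N\,\Phi(t)\int_t^T\Phi(s)^{-1}ds$, whose integrand is a sum of exponentials. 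With $\tau=T-t$ this gives
\[
\hat h_1(t)=\frac{bN(1-\xi)-\frac{l}{a}\sigma^2\gamma N\big(e^{\frac{a}{l}\tau}+\xi e^{-\frac{a}{l}\tau}-1-\xi\big)}{e^{\frac{a}{l}\tau}-\xi e^{-\frac{a}{l}\tau}}-bN,
\]
with no logarithm, so your $\xi<0$ concern never arises. The paper's proof contains the identical error: its $A(t)$ is your $\Phi$, and the claimed identity $\partial_t(A\bar h_1)=A\sigma^2\gamma N$ fails for it. So the printed $\hat h_1$ does not solve the ODE you both derive, and checking only $\hat h_1(T)=0$ cannot reveal this. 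A quick test: for $b=0$, $\alpha=a$ (so $\xi=0$, $\theta\equiv a$) the true solution is $-\frac{l}{a}\sigma^2\gamma N\big(1-e^{-a\tau/l}\big)$, while the stated formula tends to $-\frac{l}{a}\sigma^2\gamma N\big(e^{a\tau/l}-1\big)$ as $\xi\to 0$. The corrected $\hat h_1$ gives the long-horizon feedback $-\frac{a}{l}(q-N)$, i.e.\ hedging towards $N$, whereas the stated one makes the unclipped feedback grow exponentially in $T-t$.

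The same problem affects $\hat h_0$. The third equation reads $\hat h_0'=\frac{(\hat h_1+bN)^2}{4l}-\frac{1}{2}\sigma^2\gamma N^2$, so with $\hat h_0(T)=0$ you get $\hat h_0(t)=\int_t^T\big[\frac{1}{2}\sigma^2\gamma N^2-\frac{1}{4l}(\hat h_1(s)+bN)^2\big]ds$. This is the negative of the stated expression, which also writes $\hat h_1(t)$ for $\hat h_1(s)$; it does not ``integrate directly to the stated formula''.

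Finally, the identification step fails as written. The uniqueness in Proposition~\ref{viscosity solution} is for the constrained equation. Since the unconstrained maximiser is affine in $q$, it leaves $[-C,C]$ for large $|q|$, so your ansatz is not a solution of that equation on all of $[0,T)\times\mathbb{R}^3$. A verification argument gives the unconstrained value. That value coincides with the constrained one only at initial points whose deterministic optimal inventory path keeps $|v^*|\le C$, and the clipped feedback is not justified by this ansatz.
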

\begin{proof}
Following the same approach as for the proof of Proposition \ref{CFPhys}, we have $h(T,q)=L(q,0)=\alpha q^2$ and we assume that
\begin{equation*}
    \hat{h}(t,q)=\hat{h}_0(t)+\hat{h}_1(t)q+\hat{h}_2(t)q^2.
\end{equation*}
Direct computation shows that
\begin{align*}
    \partial_t\mathcal{J}_2&=-\gamma\partial_t\hat{h} e^{-\gamma (x+qS-NS-\hat{h}(q,t))} , \\
    \partial_S\mathcal{J}_2&=\gamma(q-N) e^{-\gamma (x+qS-NS-\hat{h}(q,t))}, \\
    \partial_{SS}\mathcal{J}_2&=-\gamma^2(q-N)^2 e^{-\gamma (x+qS-NS-\hat{h}(q,t))},\\
    \partial_x\mathcal{J}_2&=\gamma e^{-\gamma (x+qS-NS-\hat{h}(q,t))}, \\
    \partial_q\mathcal{J}_2&=\gamma(S-\partial_q \hat{h}) e^{-\gamma (x+qS-NS-\hat{h}(q,t))}.
\end{align*}
Substituting this expression into Equation (\ref{HJB1_reduce3}), we obtain that
\begin{equation*}
    -\partial_t\hat{h}_0-\partial_t\hat{h}_1q-\partial_t\hat{h}_2q^2-\frac{1}{2}\sigma^2\gamma(q-N)^2+\frac{(bq-bN-\hat{h}_1-2\hat{h}_2q)^2}{4l},
\end{equation*}
and terminal time condition $\hat{h}_0(T)=0$, $\hat{h}_1(T)=0$ and $\hat{h}_2(T)=\alpha$. By collecting the same terms in powers of $q$, we obtain the following system of ODEs:
\begin{align}
    &-\partial_t\hat{h}_2-\frac{1}{2}\sigma^2\gamma+\frac{(b-2\hat{h}_2)^2}{4l}=0, \label{ODEs_1}\\
    &-\partial_t\hat{h}_1+\sigma^2\gamma N-\frac{(b-2\hat{h}_2)(\hat{h}_1+bN)}{2l}=0, \label{ODEs_2}\\
    &-\partial_t\hat{h}_0-\frac{1}{2}\sigma^2\gamma N^2+\frac{(\hat{h}_1+bN)^2}{4l}=0. \label{ODEs_3}
\end{align}
We notice that Equation (\ref{ODEs_1}) is the same as the one in the proof of Proposition \ref{CFPhys}, and thus we obtain
\begin{equation*}
    \hat{h}_2(t)=a\frac{1+\xi e^{-\frac{2a}{l}(T-t)}}{1-\xi e^{-\frac{2a}{l}(T-t)}}+\frac{1}{2}b,
\end{equation*}
where 
\begin{align*}
    a=\sqrt{\frac{l}{2}\sigma^2\gamma},\ \ 
    \xi=\frac{\alpha-\frac{b}{2}-a}{\alpha-\frac{b}{2}+a}.
\end{align*}
To solve Equation (\ref{ODEs_2}), we apply the method of integrating factors. We first perform a change of variables
\begin{equation*}
    \bar{h}_1:=\hat{h}_1(t)+bN,
\end{equation*}
and the linear equation
\begin{equation*}
    -\partial_t\bar{h}_1+\sigma^2\gamma N+\frac{a}{l}\frac{1+\xi e^{-\frac{2a}{l}(T-t)}}{1-\xi e^{-\frac{2a}{l}(T-t)}}\bar{h}_1=0,
\end{equation*}
We set the integrating factors as
\begin{equation*}
    A(t)=\exp\left\{{-\frac{a}{l}\int_t^{T}\frac{1+\xi e^{-\frac{2a}{l}(T-s)}}{1-\xi e^{-\frac{2a}{l}(T-s)}}ds}\right\}=\frac{1-\xi}{e^{\frac{a}{l}(T-t)}-\xi e^{-\frac{a}{l}(T-t)}}.
\end{equation*}
We use the identity $\partial_t(A(t)\bar{h}_1(t))=A(t)\sigma^2\gamma N$, and integrate over $[t,T]$, obtaining
\begin{equation*}
    \bar{h}_1(t)=A(t)^{-1}\left(A(T)bN-\sigma^2\gamma N\int_t^{T}A(s)ds\right).
\end{equation*}
Combine with the terminal time condition $\hat{h}_1(T)=0$, we obtain
\begin{align*}
    \hat{h}_1(t)&=\left(\frac{e^{\frac{a}{l}(T-t)}-\xi e^{-\frac{a}{l}(T-t)}}{1-\xi}-1\right)bN\\
    &-\sigma^2\gamma N\left(\frac{e^{\frac{a}{l}(T-t)}-\xi e^{-\frac{a}{l}(T-t)}}{2\frac{a}{l}\sqrt{\xi}}\right)\ln\frac{(e^{\frac{a}{l}(T-t)}-\sqrt{\xi})(1+\sqrt{\xi})}{(e^{\frac{a}{l}(T-t)}+\sqrt{\xi})(1-\sqrt{\xi})}
\end{align*}
where $a$ and $\xi$ are defined as in the proof of Proposition \ref{CFPhys}.
Substituting $\hat{h}_1(t)$ into Equation (\ref{ODEs_3}), and considering the terminal time condition $\hat{h}_0(T)=0$, we obtain
\begin{equation*}
    \hat{h}_0(t)=\int_t^{T}\left[\frac{1}{4l}\left(\hat{h}_1(t)+bN\right)^2-\frac{1}{2}\sigma^2\gamma N^2\right]ds,
\end{equation*}
which can be evaluated numerically.
\end{proof}

Thanks to the relationship between $\mathcal{J}(t,x,q,S)$ and $\mathcal{P}(t,q,S)$ in Equation (\ref{fromJto_P}):
\begin{equation*}
    \mathcal{J}(t,x,q,S)=-\exp[-\gamma e^{r(T-t)}(x+qS-\mathcal{P}(t,q,S))],
\end{equation*}
we can obtain the closed-form expression of $\mathcal{P}(t,q,S)$ for the two contracts. By direct computation, we obtain the following corollaries.

\begin{corollary}
\label{PHYSICALprice}
    (physical delivery)
Assuming $\mu=r=0$, the utility indifference fee for the payoff in (\ref{payoff_physical}) is given by
\begin{equation}
    \mathcal{P}_1^*(t,q,S)=NS+h(q,t),
\end{equation}
where
\begin{equation}
    h(t,q)=\left(a\frac{1+\xi e^{-\frac{2a}{l}(T-t)}}{1-\xi e^{-\frac{2a}{l}(T-t)}}+\frac{1}{2}b\right)(q-N)^2.
    \end{equation} 
\end{corollary}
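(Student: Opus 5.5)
The corollary follows by matching the closed-form value function of Proposition~\ref{CFPhys} with the representation \eqref{fromJto_P} of $\mathcal J$ in terms of the indifference fee. With $r=0$ the factor $e^{r(T-t)}$ equals one, so \eqref{fromJto_P} reads
\[
\mathcal J(t,x,q,S)=-\exp\!\big[-\gamma\,(x+qS-\mathcal P(t,q,S))\big].
\]
Proposition~\ref{CFPhys} gives
\[
\mathcal J_1^*(t,x,q,S)=-\exp\!\big(-\gamma(x+qS-NS-h(t,q))\big).
\]
Both expressions describe the same value function $\mathcal J_1$, since the viscosity solution is unique by Proposition~\ref{viscosity solution}. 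The map $y\mapsto -e^{-\gamma y}$ is strictly monotone, so the exponents must agree, and therefore $\mathcal P_1^*(t,q,S)=NS+h(t,q)$.

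An equivalent route uses \eqref{indifference_price} directly. By \eqref{transformation} with $r=0$ we have $\theta(t,q,S)=\exp(\gamma(NS+h(t,q)))$. Then $\mathcal P=\gamma^{-1}\log\theta=NS+h$, and the formula for $h$ is copied from Proposition~\ref{CFPhys}.

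As a consistency check, I would substitute $\mathcal P=NS+h$ into \eqref{indifferencepriceHJB} with $\mu=r=0$. We have $\partial_S\mathcal P=N$, $\partial_{SS}\mathcal P=0$ and $\partial_q\mathcal P=\partial_q h$. The equation then collapses to
\[
-\partial_t h-\tfrac12\sigma^2\gamma(q-N)^2+\sup_{|v|\le C}\{-lv^2+(bq-bN-\partial_qh)v\}=0 .
\]
In the unconstrained regime this is exactly \eqref{HJB1_reduce3}. The terminal condition also matches, since $\mathcal P(T,q,S)=NS+\alpha(q-N)^2=\Pi(S)+L(q,N)$.

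The main obstacle is making sure the closed form of Proposition~\ref{CFPhys} really is the value function and not just a solution of the unconstrained equation. The identification is exact only where the constraint $|v|\le C$ is not active. Elsewhere I would rely on the statement of Proposition~\ref{CFPhys} as given, which is what the corollary is derived from. The remaining work is verifying that $h$ is well defined, i.e.\ that $1-\xi e^{-\frac{2a}{l}(T-t)}\neq 0$. This holds because $|\xi|<1$ whenever $\alpha-\tfrac b2>0$, which is the case for the parameters of Table~\ref{parame0}.
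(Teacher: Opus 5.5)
Your proposal is correct and follows the paper's own route. You equate the closed-form $\mathcal{J}_1^*$ of Proposition~\ref{CFPhys} with the representation \eqref{fromJto_P} at $r=0$ and read off the exponent, which is exactly the ``direct computation'' the paper invokes. Your caveat that the closed form only solves the HJB where the constraint $|v|\le C$ is inactive is well placed: it is a limitation inherited from Proposition~\ref{CFPhys}, not a gap in your derivation of the corollary.
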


\begin{corollary}(TRS)
\label{TRSprice}
Assuming $\mu=r=0$, the utility indifference fee for the payoff in (\ref{cash settlement}) is given by
\begin{align*}
    &\mathcal{P}_2^*(t,q,S)=NS+\hat{h}(q,t),\\
    &\hat{h}(t,q)=\hat{h}_0(t)+\hat{h}_1(t)q+\hat{h}_2(t)q^2.
\end{align*}
where
\begin{align*}
    \hat{h}_2(t)&=a\frac{1+\xi e^{-\frac{2a}{l}(T-t)}}{1-\xi e^{-\frac{2a}{l}(T-t)}}+\frac{1}{2}b,\\
    \hat{h}_1(t)&=\left(\frac{e^{\frac{a}{l}(T-t)}-\xi e^{-\frac{a}{l}(T-t)}}{1-\xi}-1\right)bN\\
    &\quad \quad-\sigma^2\gamma N\left(\frac{e^{\frac{a}{l}(T-t)}-\xi e^{-\frac{a}{l}(T-t)}}{2\frac{a}{l}\sqrt{\xi}}\right)\ln\frac{(e^{\frac{a}{l}(T-t)}-\sqrt{\xi})(1+\sqrt{\xi})}{(e^{\frac{a}{l}(T-t)}+\sqrt{\xi})(1-\sqrt{\xi})},\\
    \hat{h}_0(t)&=\int_t^{T}\left[\frac{1}{4l}\left(\hat{h}_1(t)+bN\right)^2-\frac{1}{2}\sigma^2\gamma N^{2}\right]ds.
\end{align*}

\end{corollary}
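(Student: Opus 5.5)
The plan is to read off $\mathcal{P}_2^*$ directly from Proposition \ref{CFTRS}, using the link between $\mathcal{J}$ and $\mathcal{P}$ in Equation \eqref{fromJto_P}. With $r=0$, that relation becomes
\[
\mathcal{J}(t,x,q,S)=-\exp\big[-\gamma(x+qS-\mathcal{P}(t,q,S))\big].
\]
Proposition \ref{CFTRS} gives
\[
\mathcal{J}_2^*(t,x,q,S)=-\exp\big(-\gamma(x+qS-NS-\hat h(t,q))\big).
\]
The map $y\mapsto-\exp(-\gamma y)$ is strictly increasing, so it is injective. Matching the two arguments therefore gives $x+qS-\mathcal{P}=x+qS-NS-\hat h$, that is, $\mathcal{P}_2^*(t,q,S)=NS+\hat h(t,q)$. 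The coefficients $\hat h_0,\hat h_1,\hat h_2$ are then exactly those of Proposition \ref{CFTRS}, with $a$ and $\xi$ as in Proposition \ref{CFPhys}.

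As a consistency check, I would verify directly that $NS+\hat h$ satisfies the fee HJB \eqref{indifferencepriceHJB} with $\mu=r=0$. In this case $\partial_S\mathcal P=N$, $\partial_{SS}\mathcal P=0$ and $\partial_q\mathcal P=\partial_q\hat h$. The equation then reduces to
\[
-\partial_t\hat h-\tfrac12\sigma^2\gamma(q-N)^2+\sup_{|v|\le C}\{-lv^2+(bq-bN-\partial_q\hat h)v\}=0 .
\]
In the unconstrained case this is Equation \eqref{HJB1_reduce3}. Collecting powers of $q$ gives the system \eqref{ODEs_1}--\eqref{ODEs_3}. For the terminal condition, $\mathcal P(T,q,S)=NS+\alpha q^2=\Pi(S)+L(q,0)$, which holds because $\hat h_2(T)=\alpha$ and $\hat h_1(T)=\hat h_0(T)=0$.

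The main obstacle is not the algebra but the interplay with the constraint $|v|\le C$. The closed form is exact only where the unconstrained maximiser stays within bounds, so I would state the corollary under the same standing convention as Proposition \ref{CFTRS}, namely the unconstrained Riccati solution. I would also invoke uniqueness of viscosity solutions (Proposition \ref{viscosity solution}) to identify this smooth solution with the indifference fee \eqref{indifference_price}.

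Finally, I would check that $\hat h_1$ is well defined. For $\sqrt{\xi}$ to make sense one needs $\xi>0$, i.e.\ $\alpha-\tfrac b2>a$; otherwise the logarithm is read through its analytic continuation (an arctangent form). This affects only how the formula is written, not the identification step.
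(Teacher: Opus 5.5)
Your identification step, matching exponents in \eqref{fromJto_P} with $r=0$ against Proposition~\ref{CFTRS}, is exactly the paper's argument. It correctly gives $\mathcal P_2^*=NS+\hat h$ for whatever $\hat h$ that proposition supplies.

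The gap is that you take the explicit coefficients on faith. Your consistency check stops at ``collecting powers of $q$ gives \eqref{ODEs_1}--\eqref{ODEs_3}'' plus the terminal values. If you actually substitute the stated $\hat h_1,\hat h_0$, the check fails. Write $\theta=\hat h_2-\frac b2$, so $b-2\hat h_2=-2\theta$, and $\bar h_1=\hat h_1+bN$. Then \eqref{ODEs_2} reads $\partial_t\bar h_1=\sigma^2\gamma N+\frac{\theta}{l}\bar h_1$. The integrating factor $A(t)=\exp\{-\frac1l\int_t^T\theta(s)\,ds\}$ used in the proof of Proposition~\ref{CFTRS} has $\partial_tA=\frac{\theta}{l}A$. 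Hence $\partial_t(A\bar h_1)=A\sigma^2\gamma N+\frac{2\theta}{l}A\bar h_1$, not $A\sigma^2\gamma N$; the correct factor is $A^{-1}$. With $\tau=T-t$, $l\sigma^2\gamma=2a^2$ and $\int_t^TA(s)^{-1}ds=\frac la\cdot\frac{e^{\frac al\tau}+\xi e^{-\frac al\tau}-1-\xi}{1-\xi}$, one gets
\[
\hat h_1(t)=\frac{(1-\xi)bN-2aN\big(e^{\frac al\tau}+\xi e^{-\frac al\tau}-1-\xi\big)}{e^{\frac al\tau}-\xi e^{-\frac al\tau}}-bN .
\]
This contains no logarithm and no $\sqrt\xi$, so the issue in your last paragraph is an artifact of the sign error. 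Likewise, integrating \eqref{ODEs_3} with $\hat h_0(T)=0$ gives $\hat h_0(t)=\int_t^T\big[\frac12\sigma^2\gamma N^2-\frac1{4l}(\hat h_1(s)+bN)^2\big]ds$, which has the opposite sign to the statement.

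Two independent checks confirm this. First, the physical-delivery solution $h=(\theta+\frac b2)(q-N)^2$ of Proposition~\ref{CFPhys} solves the same system \eqref{ODEs_1}--\eqref{ODEs_3}; only the terminal data differ. Its shifted linear coefficient $-2\theta N$ does satisfy $\partial_t\bar h_1=\sigma^2\gamma N+\frac{\theta}{l}\bar h_1$, because $\partial_t\theta=(\theta^2-a^2)/l$. Second, the stated $\hat h_1$ grows like $e^{\frac al(T-t)}$. With the parameters of Table~\ref{parame0}, $\frac al\approx 11.2$ and the stated $\hat h_1(0)$ is of order $-10^3$, which is incompatible with the reported fee. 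The corrected formulas give $\hat h(0,0.5)\approx 0.013$, matching the TRS fee $45.0130$ of Table~\ref{tab:IP}.

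So, as written, the corollary is false for $\hat h_1$ and $\hat h_0$. Only $\mathcal P_2^*=NS+\hat h$ with the stated $\hat h_2$ survives. Carrying out your own verification step is exactly what would have exposed this.

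A smaller point concerns your appeal to uniqueness in Proposition~\ref{viscosity solution}, which covers the constrained problem $|v|\le C$. The unconstrained feedback $v=-(\theta q+\frac12\bar h_1)/l$ leaves $[-C,C]$ for large $|q|$. So $NS+\hat h$ is not a solution of that HJB on the whole domain, and uniqueness cannot identify it with the fee. Under your unconstrained convention you instead need a direct verification argument for the deterministic linear-quadratic problem in $q$.
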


\section{Numerical method}\label{Numerical method}
As defined in Section~\ref{section-utility indifference fee}, we apply numerical methods to the utility indifference fee $\mathcal{P}(t, q, S)$ rather than to the value function $\mathcal{J}(t, x, q, S)$, since the former reduces the dimensionality and thus significantly simplifies the computational complexity. 

The HJB for the utility indifference fee is:
\begin{equation*}
    -\partial_t\mathcal{P}+r\mathcal{P}+(\mu-rS)q-\mu\partial_S\mathcal{P}-\frac{1}{2}\sigma^2\partial_{SS}\mathcal{P}-\frac{1}{2}\sigma^2\gamma e^{r(T-t)}(q-\partial_S\mathcal{P})^2+\sup_{|v|\leq C}\{-lv^2+(bq-b\partial_S\mathcal{P}-\partial_q\mathcal{P})v\}=0
\end{equation*}
satisfying the terminal condition
\begin{equation*}
    \mathcal{P}(T,q,S)=\Pi(S)+L(q).
\end{equation*}
At this point, infinitesimal generator attains its supremum at 
$$v^*=\min\left\{C; \max\left\{-C;\frac{bq-b\partial_S\mathcal{P}-\partial_q\mathcal{P}}{2l}\right\}\right\}.$$
Assuming that 
$$\frac{bq-b\partial_S\mathcal{P}-\partial_q\mathcal{P}}{2l}\in [-C,C],$$ then the HJB satisfied
\begin{equation*}
    -\partial_t\mathcal{P}+r\mathcal{P}+(\mu-rS)q-\mu\partial_S\mathcal{P}-\frac{1}{2}\sigma^2\partial_{SS}\mathcal{P}-\frac{1}{2}\sigma^2\gamma e^{r(T-t)}(q-\partial_S\mathcal{P})^2+\frac{(bq-b\partial_S\mathcal{P}-\partial_q\mathcal{P})^2}{4l}=0.
\end{equation*}
To solve this stochastic control problem numerically, we employ a semi-implicit operator splitting scheme:
\begin{itemize}
    \item For the linear part, $r\mathcal{P}+(\mu-rS)q-\mu\partial_S\mathcal{P}-\frac{1}{2}\sigma^2\partial_{SS}\mathcal{P}$, we employ an implicit FDM(Finite difference scheme).
    \item For the nonlinear part, $-\frac{1}{2}\sigma^2\gamma e^{r(T-t)}(q-\partial_S\mathcal{P})^2+\frac{(bq-b\partial_S\mathcal{P}-\partial_q\mathcal{P})^2}{4l}$, we adopt a monotonic explicit FDM.
\end{itemize}

The problem is on the domain
\begin{equation*}
    (t,S,q)\in[0,T]\times[-\infty,+\infty]\times[q_{min},q_{max}]. 
\end{equation*}
To implement the numerical method, we localize the domain to
\begin{equation*}
    (t,S,q)\in[0,T]\times[S_{min},S_{max}]\times[q_{min},q_{max}].
\end{equation*}

We will set $q_{max}$ sufficiently large to ensure that the search for the optimal control can be within the domain. No additional economic boundary condition is imposed in the inventory variable. At 
$q_{\min}$ and $q_{\max}$ one-sided finite differences are used. We also set $\partial_{SS}\mathcal{P}=0$ at the boundaries $S_{\min},S_{\max}$.

Now we begin the discretization of the equation (\ref{indifferencepriceHJB}). First, we define an equally spaced grid $S_i=S_{\min}+i\Delta S, i=0,...,I, \Delta S=(S_{\max}-S_{\min})/I$ and $q_j=q_{\min}+j\Delta q, j=0,...,J, \Delta q=(q_{\max}-q_{\min})/J$. The discrete time grid is defined by $t_n=n\Delta t$, where $\Delta t=T/N$ denotes the uniform time step size.

We define the two components mentioned previously as two operators. More specifically:
\begin{align*}
    \text{Linear operator: }L^1\mathcal{P}&:=r\mathcal{P}+(\mu-rS)q-\mu\partial_S\mathcal{P}-\frac{1}{2}\sigma^2\partial_{SS}\mathcal{P}\\
    \text{Nonlinear operator: }L^2\mathcal{P}&:=-\frac{1}{2}\sigma^2\gamma e^{r(T-t)}(q-\partial_S\mathcal{P})^2+\frac{(bq-b\partial_S\mathcal{P}-\partial_q\mathcal{P})^2}{4l}
\end{align*}
Let $\mathcal{P}(t_n,S_i,q_j)$ denote the exact solution of equation evaluated at the grid points $(t_n, S_i, q_j)$, and $\mathcal{P}^n_{i,j}$ the corresponding discrete approximation at point $(t_n, S_i, q_j)$. Moreover $(L^1\mathcal{P})^n_{i,j}$ and $(L^2\mathcal{P})^n_{i,j}$ denote the discrete value of the differential operator $L^1\mathcal{P}$ and $L^2\mathcal{P}$ at point $(t_n, S_i, q_j)$.

The non-linear part of $L^2\mathcal{P}$ is discretized explicitly as
\begin{align*}
	(L^2\mathcal{P})^{n+1}_{i,j}&=-\frac{1}{2}\sigma^2\gamma e^{r(T-(n+1)\Delta t)}\left(q_j-\frac{\mathcal{P}^{n+1}_{i+1,j}-\mathcal{P}^{n+1}_{i-1,j}}{2\Delta S}\right)^2\\&+\frac{1}{4l}\left(bq_j-b\frac{\mathcal{P}^{n+1}_{i+1,j}-\mathcal{P}^{n+1}_{i-1,j}}{2\Delta S}-\frac{\mathcal{P}^{n+1}_{i,j+1}-\mathcal{P}^{n+1}_{i,j-1}}{2\Delta q}\right)^2.
\end{align*}
The linear part of $L^1\mathcal{P}$ is discretized implicitly as
\begin{align*}
	(L^1\mathcal{P})^n_{i,j}&=r\mathcal{P}^{n}_{i,j}+(\mu-rS_i)q_j-\mu\frac{\mathcal{P}^{n}_{i+1,j}-\mathcal{P}^{n}_{i-1,j}}{2\Delta S}-\frac{1}{2}\sigma^2\left(\frac{\mathcal{P}^{n}_{i+1,j}-2\mathcal{P}^{n}_{i,j}+\mathcal{P}^{n}_{i-1,j}}{(\Delta S)^2}\right).
\end{align*}
Therefore, the final approximation scheme reads as follows
\begin{align*}
\frac{\mathcal{P}^{n+1}_{i,j}-\mathcal{P}^{n}_{i,j}}{\Delta t}&=(L^1\mathcal{P})^n_{i,j}+(L^2\mathcal{P})^{n+1}_{i,j}.
\end{align*}
Moreover, we can write it in matrix form as
\begin{align*}
\left(\frac{\sigma^2\Delta t}{2(\Delta S)^2}+\frac{\mu\Delta t}{2\Delta S}\right)\mathcal{P}^{n}_{i+1,j}&-\left(\frac{\sigma^2\Delta t}{(\Delta S)^2}+1+r\Delta t\right)\mathcal{P}^{n}_{i,j}+\left(\frac{\sigma^2\Delta t}{2(\Delta S)^2}-\frac{\mu\Delta t}{2\Delta S}\right)\mathcal{P}^{n}_{i-1,j}\\&=-\mathcal{P}^{n+1}_{i,j}-\frac{\Delta t}{2}\sigma^2\gamma e^{r(T-({n+1})\Delta t)}\left(q_j-\frac{\mathcal{P}^{n+1}_{i+1,j}-\mathcal{P}^{n+1}_{i-1,j}}{2\Delta S}\right)^2\\&+\frac{\Delta t}{4l}\left(bq_j-b\frac{\mathcal{P}^{n+1}_{i+1,j}-\mathcal{P}^{n+1}_{i-1,j}}{2\Delta S}-\frac{\mathcal{P}^{n+1}_{i,j+1}-\mathcal{P}^{n+1}_{i,j-1}}{2\Delta q}\right)^2+(\mu-rS_i)q_j\Delta t,
\end{align*}
with the terminal condition
\begin{equation*}
    \mathcal{P}^{N}_{i,j}=\Pi(S_i)+L(q_j),
\end{equation*}
and boundary setting
\begin{align*}
    &\partial_S\mathcal{P}^n_{0,j}=\partial_S\mathcal{P}^n_{1,j},\\
    &\partial_S\mathcal{P}^n_{I,j}=\partial_S\mathcal{P}^n_{I-1,j},\\
    &\partial_q\mathcal{P}^n_{i,0}=\frac{\mathcal{P}^{n}_{i,1}-\mathcal{P}^{n}_{i,0}}{\Delta q},\\
    &\partial_q\mathcal{P}^n_{i,J}=\frac{\mathcal{P}^{n}_{i,J}-\mathcal{P}^{n}_{i,J-1}}{\Delta q}.\\
\end{align*}
\begin{algorithm}[tp]
	\caption{Numerical method for the PDE}
	\begin{algorithmic}[1]
		\Require Terminal condition $\mathcal{P}^{N}_{i,j}$
		\Function{solve\_pde}{$T, Domain, Grids, Parameters$}
		\For{$n=N-1,\cdots,0$}
		\State $A \gets \text{Build\_Matrix}$ \Comment{Calculate matrix using implicit scheme}
		\State $O\_nlin \gets \text{Explicit\_Step}(\mathcal{P}[n+1])$ \Comment{Compute explicit step}
		\State $b \gets O\_nlin \times \Delta t + \mathcal{P}[n+1]$ \Comment{Compute right-hand side of equation}
		\State $\mathcal{P}[n] \gets \text{sparse\_solver}(A, b)$ \Comment{Update value function}
		\State $v[n] \gets \text{Optimal\_trading}(\mathcal{P}[n])$ \Comment{Calculate optimal strategy}
		\EndFor
		\State \Return $\mathcal{P}, v$
		\EndFunction
	\end{algorithmic}
\end{algorithm}
In our numerical experiments, we set the discretization parameters as in Table \ref{tab:paramsN}.
\begin{table}[H]
	\centering
    \caption{Discretization parameters}
    \label{parame1}
	\begin{tabular}{|c|c|}
		\hline
		Parameter & Value  \\
		\hline
        $N$ & 1001\\
        $J$ & 101 \\
		$I$ & 101 \\
		$S_{min}$ & 15  \\
		$S_{max}$ & 75 \\
        $q_{min}$ & -1 \\
		$q_{max}$ & 1 \\
		\hline
	\end{tabular}\label{tab:paramsN}
\end{table}.
\clearpage
\section{Additional Figures and Tables}
\label{additional}

In Figures \ref{fig:phys}-\ref{fig:col2} we report the indifference fee and the optimal trading strategy for the four contracts at time $t=0, \ 0.5$ and $0.95$. 

\begin{figure}[h]
    \centering
    \includegraphics[width=0.8\linewidth]{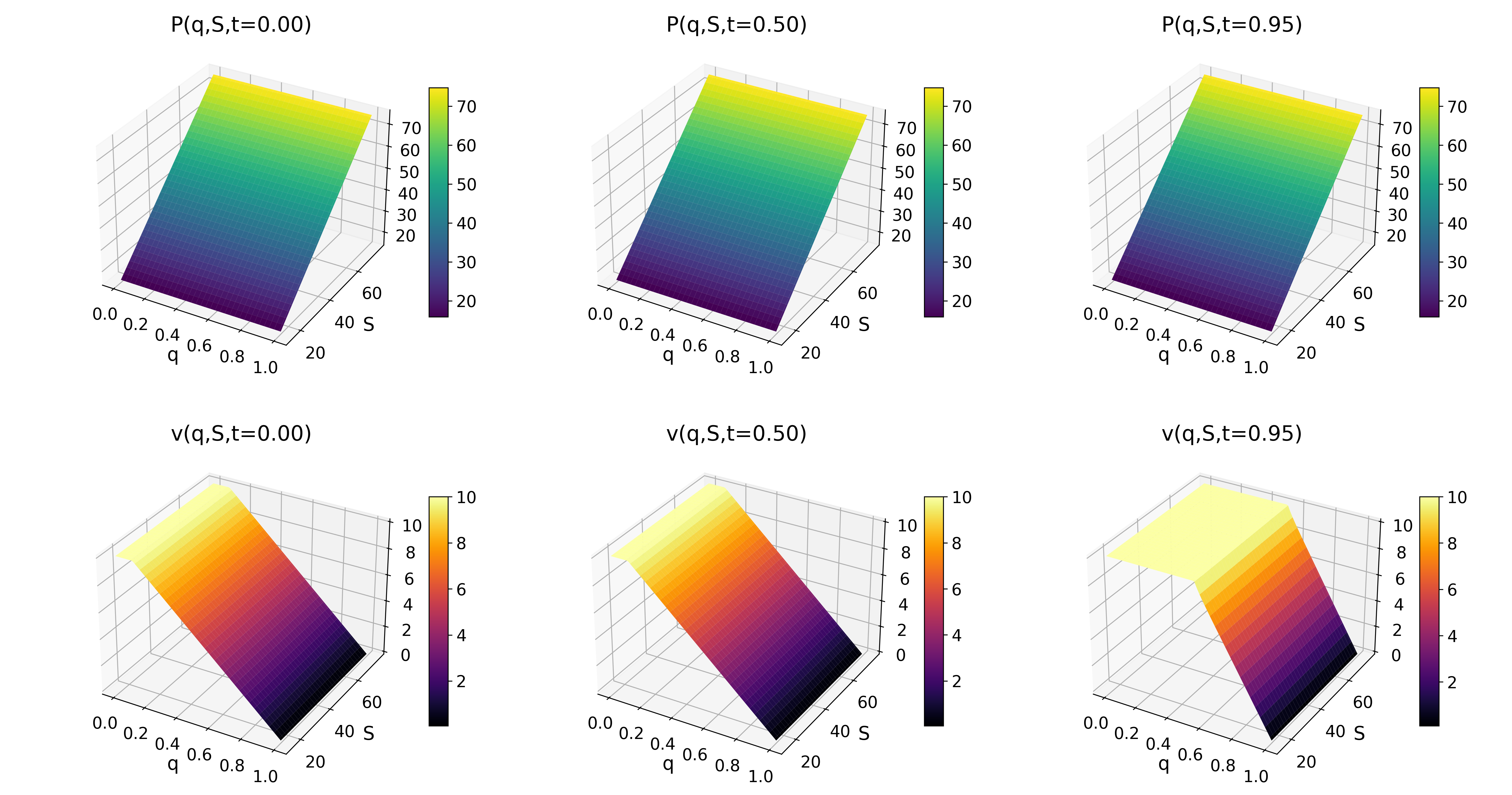}
    \caption{Physical delivery contract: indifference fee $\mathcal{P}$ (above) and optimal control (below) at $t=0,\ 0.5,\ 0.95$}
    \label{fig:phys}
\end{figure}
\begin{figure}[h]
    \centering
    \includegraphics[width=0.8\linewidth]{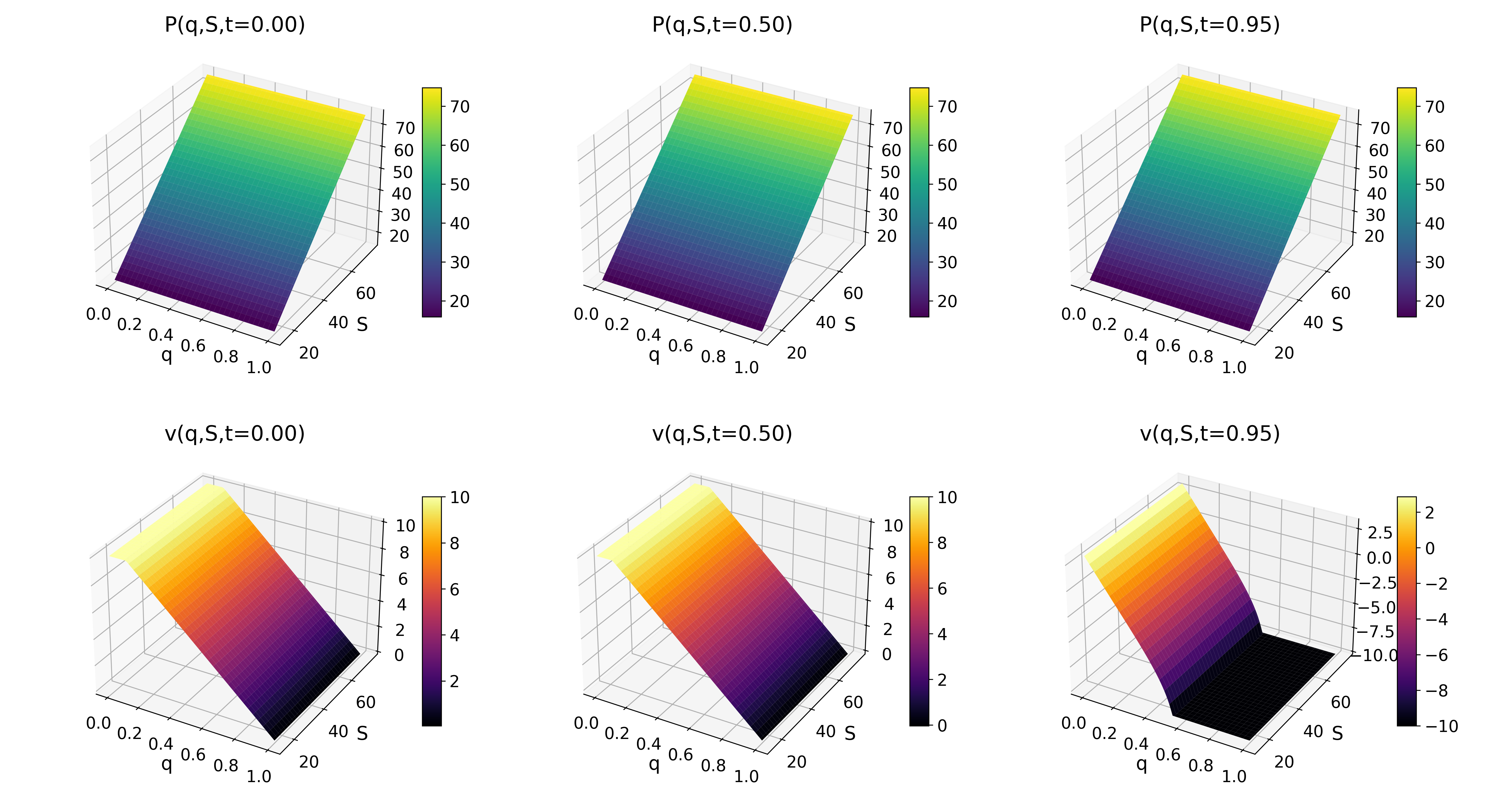}
    \caption{Cash settlement contract: indifference fee $\mathcal{P}$ (above) and the optimal control (below) at $t=0.0,\ 0.5,\ 0.95$}
    \label{fig:TRS}
\end{figure}
\begin{figure}[h]
    \centering
    \includegraphics[width=0.9\linewidth]{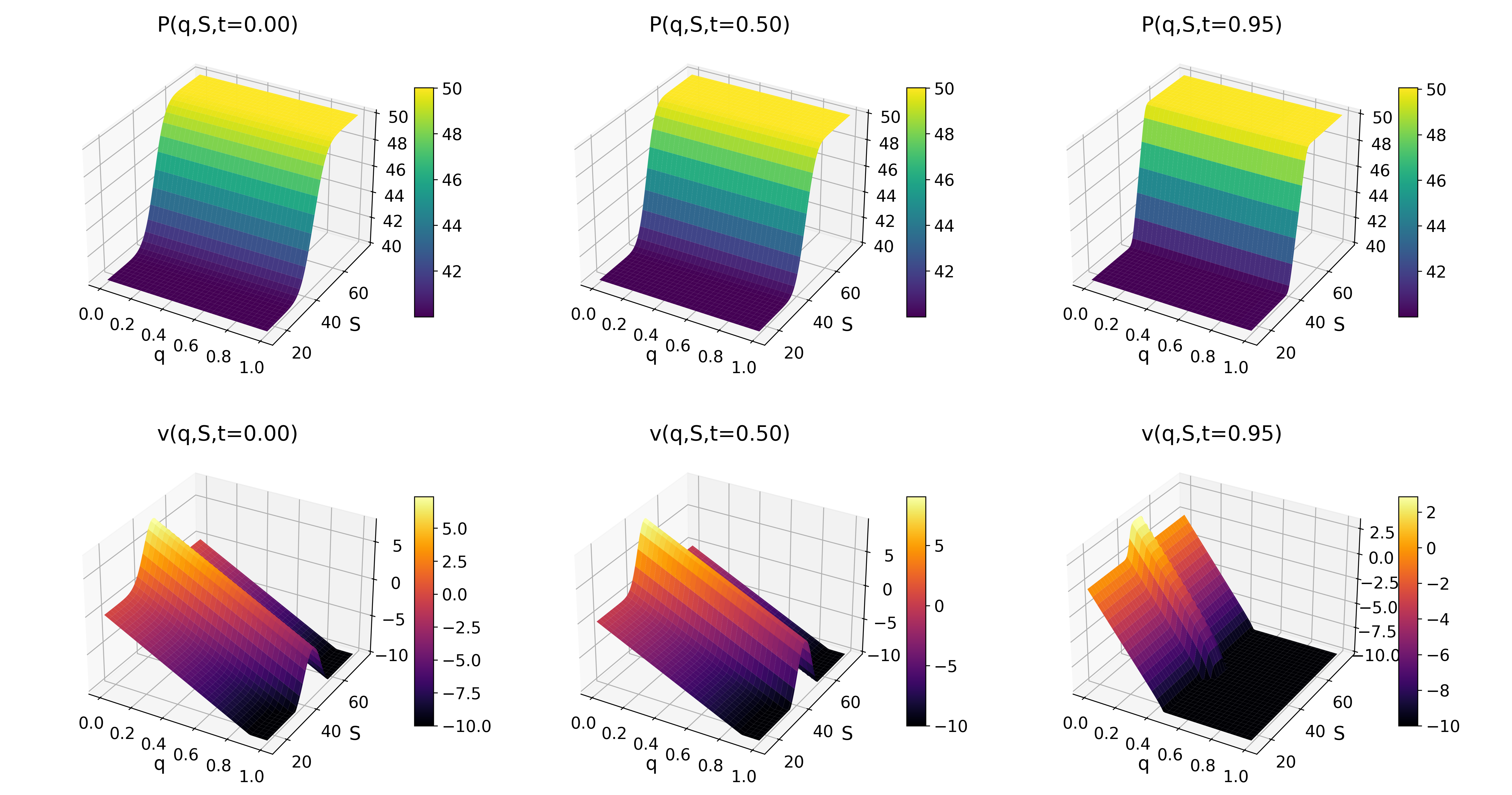}
    \caption{Collar cash contract: indifference fee $\mathcal{P}$ (above) and the optimal control (below) at $t=0,\ 0.5,\ 0.95$}
    \label{fig:col}
\end{figure}

\begin{figure}[h]
    \centering
    \includegraphics[width=0.9\linewidth]{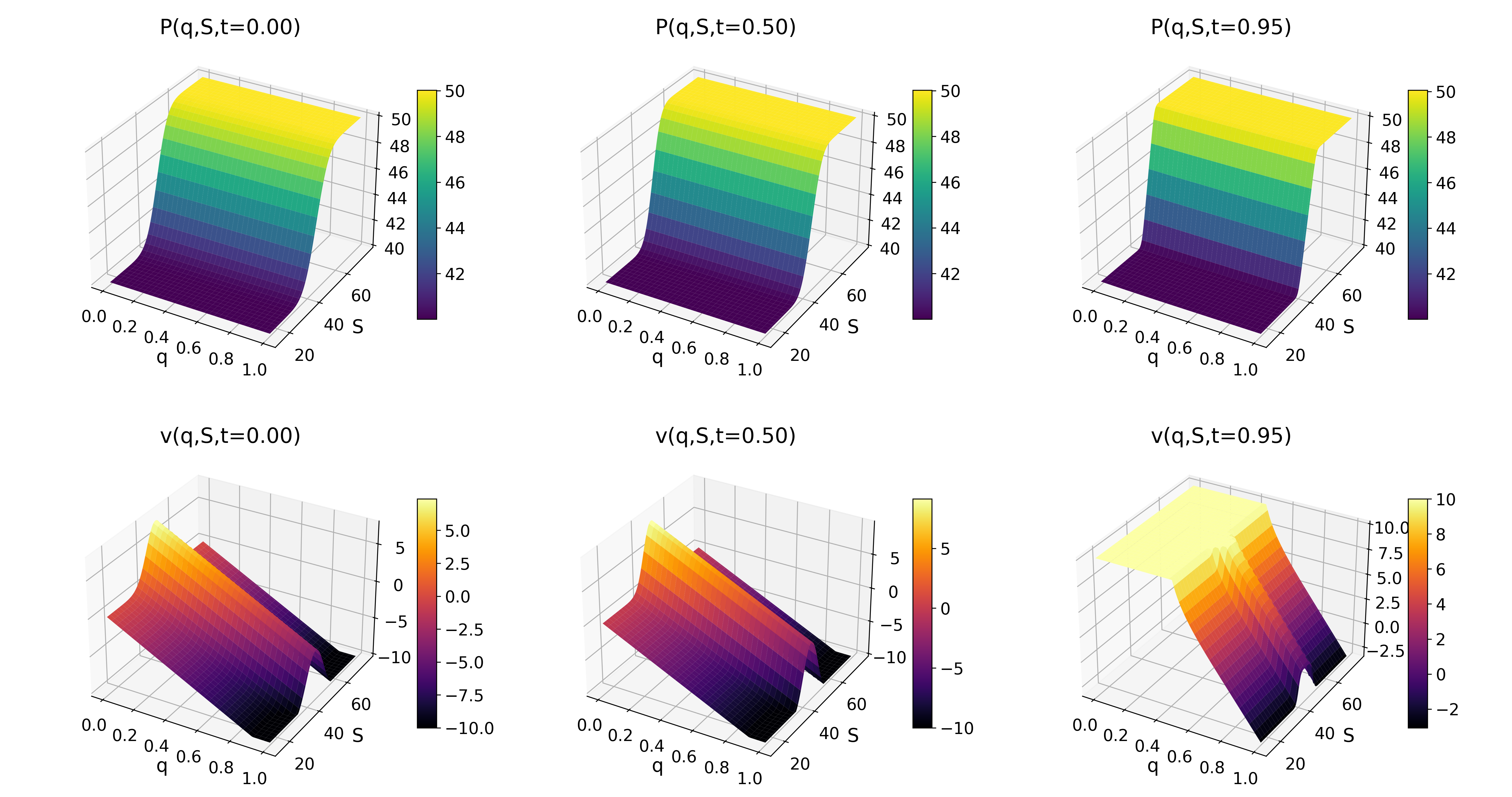}
    \caption{Collar physical contract: indifference fee $\mathcal{P}$ (above) and the optimal control (below) at $t=0,\ 0.5,\ 0.95$}
    \label{fig:col2}
\end{figure}
\clearpage
\subsection{Sensitivity analysis}\label{additionalSA}
In what follows we provide a sensitivity analysis considering the simulated scenario of Figure \ref{fig:simPhys_Collar}(left). When a parameter is swept, all the other parameters are as in Table \ref{tab:params}. All $4×2$ figure are divided into three parts: the first row corresponds to physical delivery contract, the second to the TRS contract, the third to collar with cash settlement, and the fourth to collar with physical delivery. 

\begin{figure}[h]
    \centering
    \includegraphics[width=0.9\linewidth]{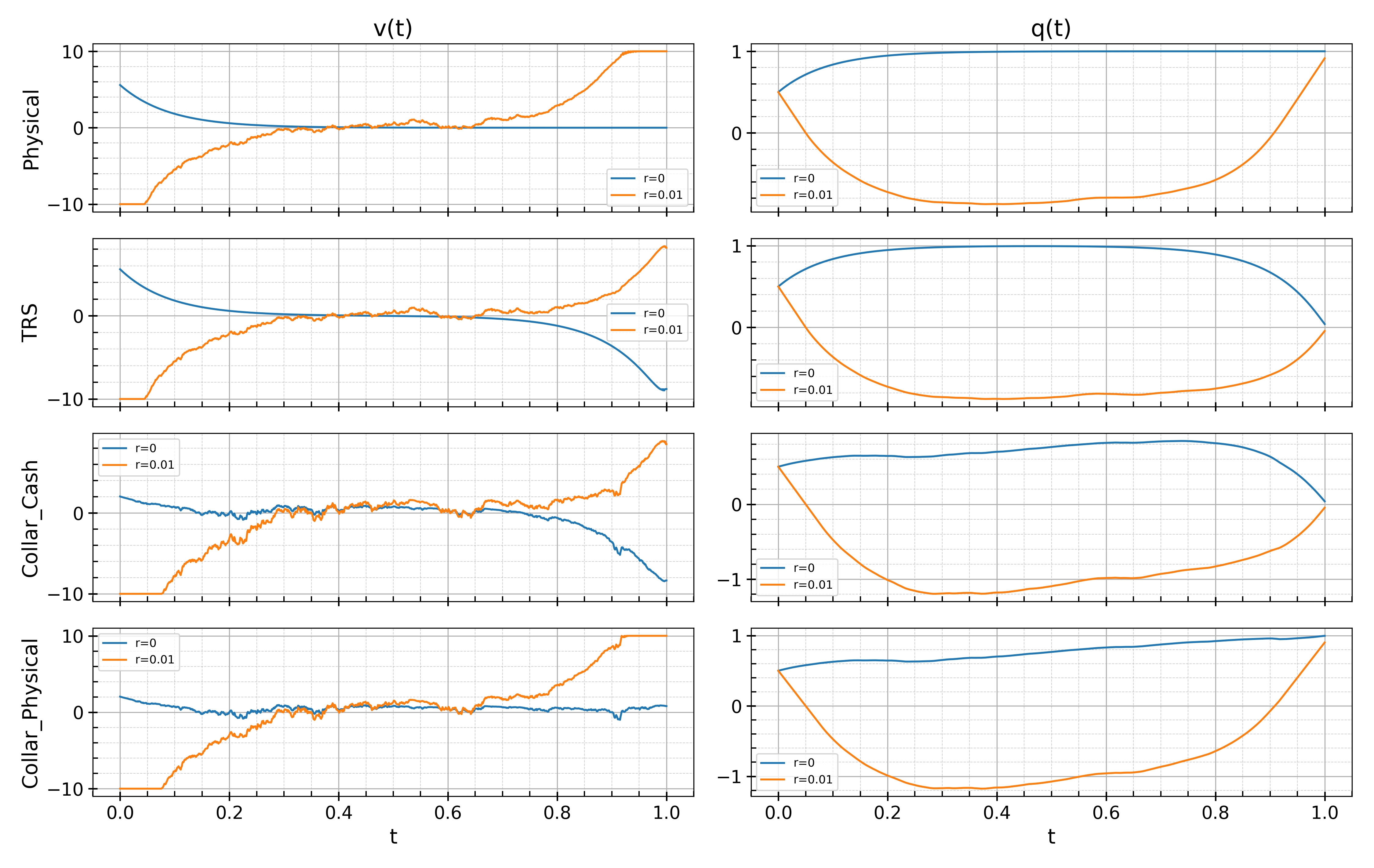}
    \caption{Simulations for different values of $r$.}
    \label{fig:sweep_r}
\end{figure}

\begin{table}[hp]
	\centering
    \caption{Fees for the four contracts}
	\begin{tabular}{|c|c|c|c|c|}
		\hline
		Parameter/Model & Physical & TRS  & Collar(Physical) & Collar(Cash)\\
		\hline
		  $r$=0.0 & $45.0029$ & $45.0130$  & $45.0042$ & $45.0078$\\
        $r$=0.01 & $44.6504$ & $44.6191$  & $44.5408$ &$44.4986$\\
		\hline
        $\mu$=-0.5 & $44.5735$ & $44.5362$  & $44.6128$ & $44.5632$\\
        $\mu$=0.0 & $45.0029$ & $45.0130$  & $45.0042$ &$45.0078$\\
        $\mu$=0.5 & $44.6346$ & $44.7134$  & $44.5973$ &$44.6617$\\
        \hline
		  $\sigma$=5 & $45.0029$ & $45.0130$ & $45.0042$ & $45.0078$\\
        $\sigma$=6 & $45.0034$ & $45.0157$ & $45.0059$ &$45.0082$ \\
        $\sigma$=7 & $45.0040$ & $45.0185$ & $45.0079$ &$45.0086$ \\
		\hline
        $\gamma$=0.001 & $45.0010$ & $45.0038$ & $45.0009$ & $45.0020$ \\
        $\gamma$=0.005 & $45.0021$ & $45.0092$ & $45.0027$ &$45.0053$ \\
        $\gamma$=0.01 & $45.0029$ & $45.0130$ & $45.0042$ &$45.0078$ \\
		\hline
        $l$=0.001 & $45.0029$ & $45.0130$ & $45.0042$ & $45.0078$ \\
        $l$=0.002 & $45.0041$ & $45.0184$ & $45.0054$ &$45.0107$ \\
        $l$=0.003 & $45.0049$ & $45.0223$ & $45.0062$ &$45.0126$ \\
        \hline
        $\alpha$=0.002 & $45.0029$ & $45.0046$ & $45.0024$ &$45.0030$ \\
        $\alpha$=0.02 & $45.0029$ & $45.0099$ & $45.0035$ &$45.0061$ \\
        $\alpha$=0.2 & $45.0029$ & $45.0130$ & $45.0042$ & $45.0078$ \\
        \hline
	\end{tabular}
\label{fees}
\end{table}

\begin{figure}[h]
    \centering
    \includegraphics[width=0.9\linewidth]{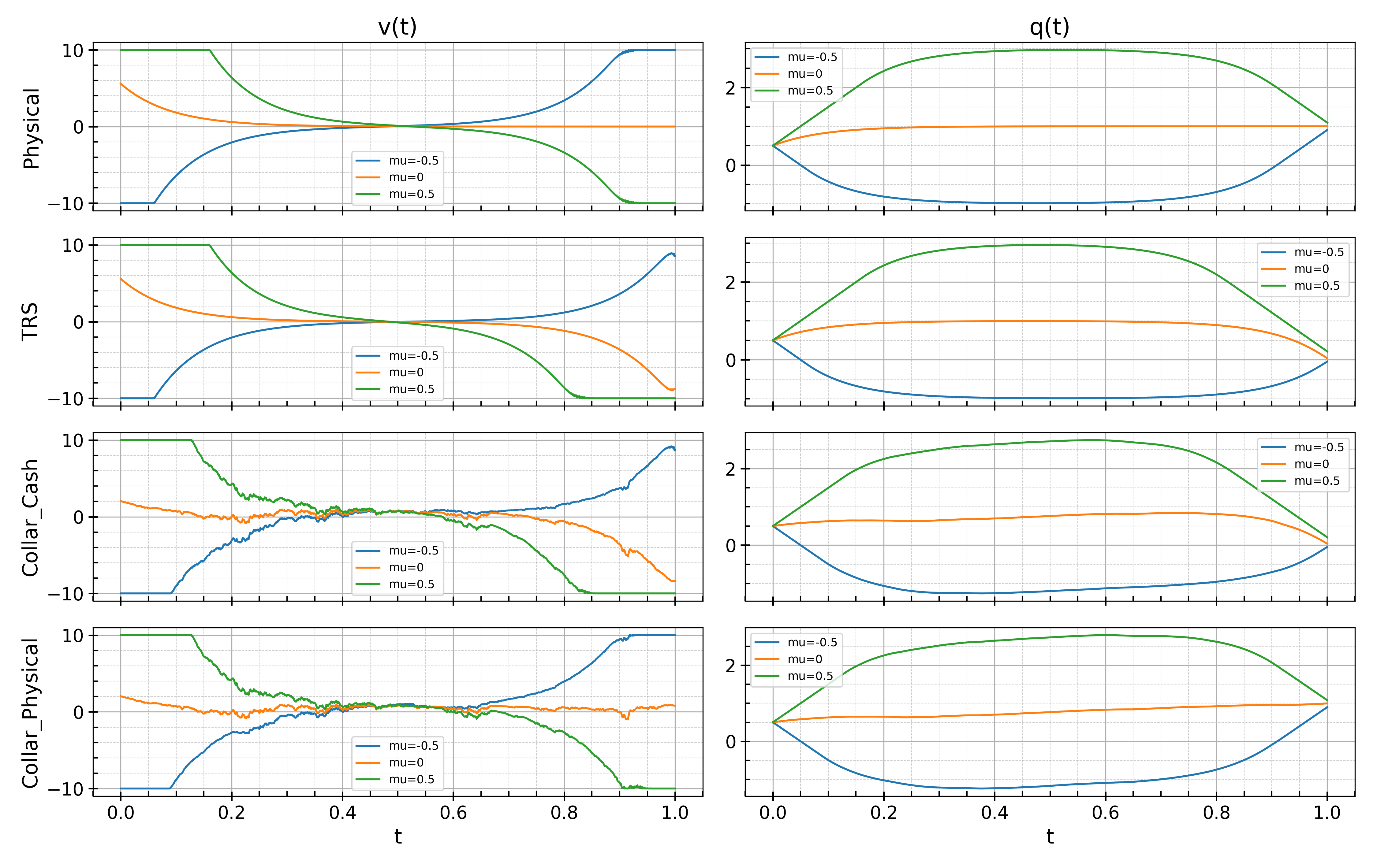}
    \caption{Simulations for different values of $\mu$.}
    \label{fig:sweep_mu}
\end{figure}

\begin{figure}[h]
    \centering
    \includegraphics[width=0.9\linewidth]{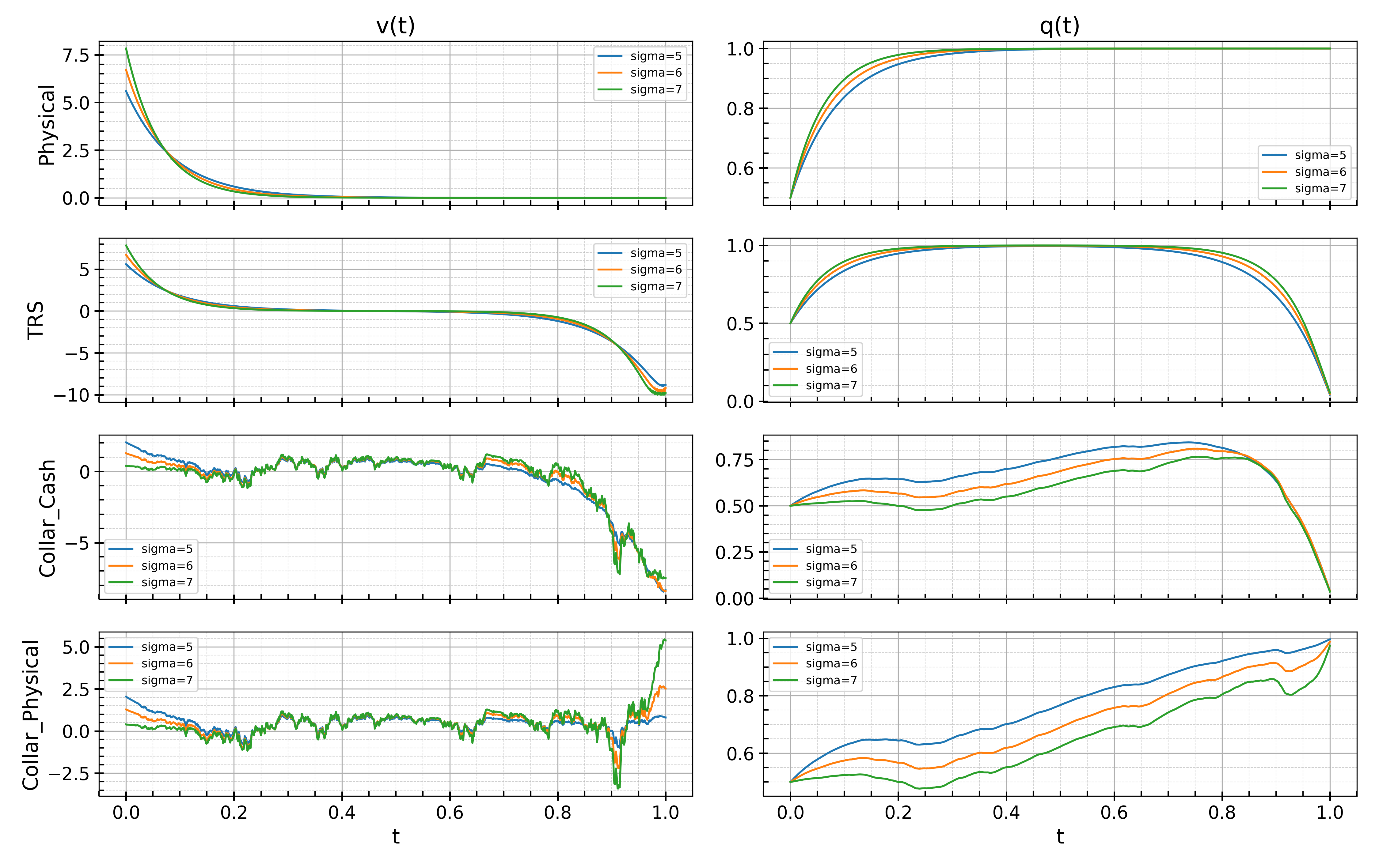}
    \caption{Simulations for different values of $\sigma$}
    \label{fig:sweep_sigma}
\end{figure}

\begin{figure}[h]
    \centering
    \includegraphics[width=0.9\linewidth]{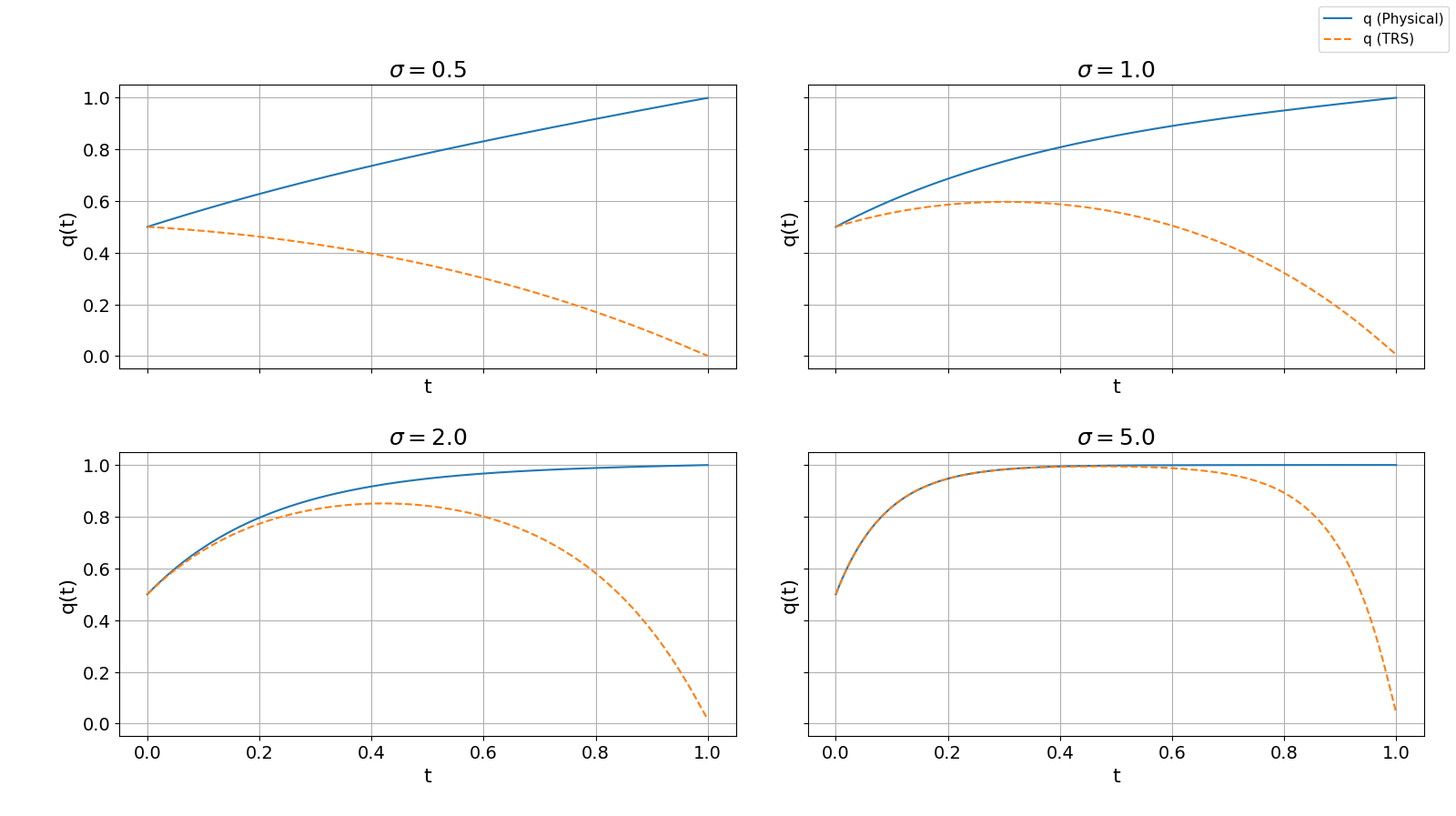}
    \caption{Comparison of the simulations of $Q(t)$ for the physical and TRS contracts under different values of $\sigma$.}
    \label{fig: physical vs TRS under sigma}
\end{figure}

\begin{figure}[h]
    \centering
    \includegraphics[width=0.9\linewidth]{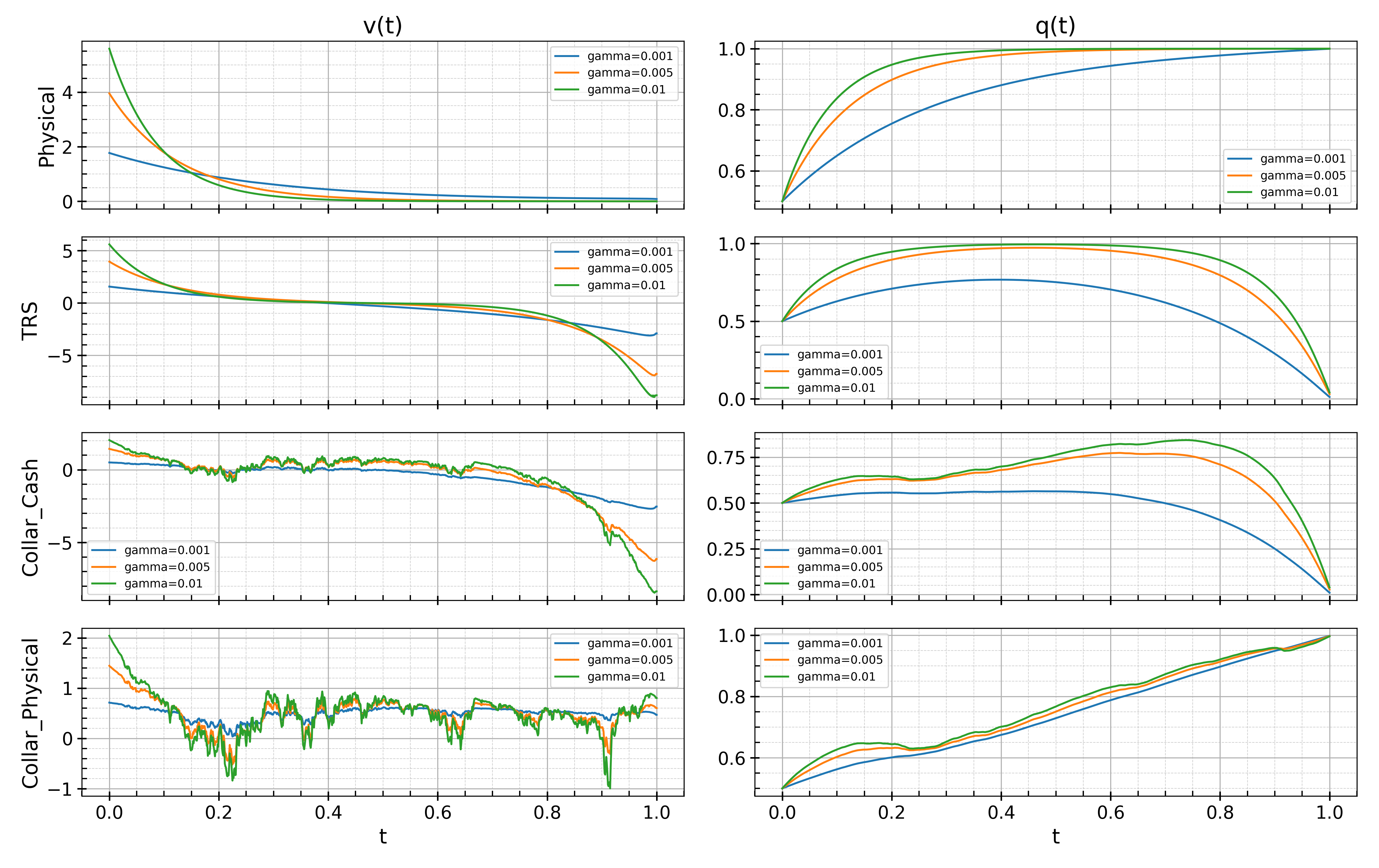}
    \caption{Simulations for different values of $\gamma$.}
    \label{fig:sweep_gamma}
\end{figure}

\begin{figure}[h]
    \centering
    \includegraphics[width=0.8\linewidth]{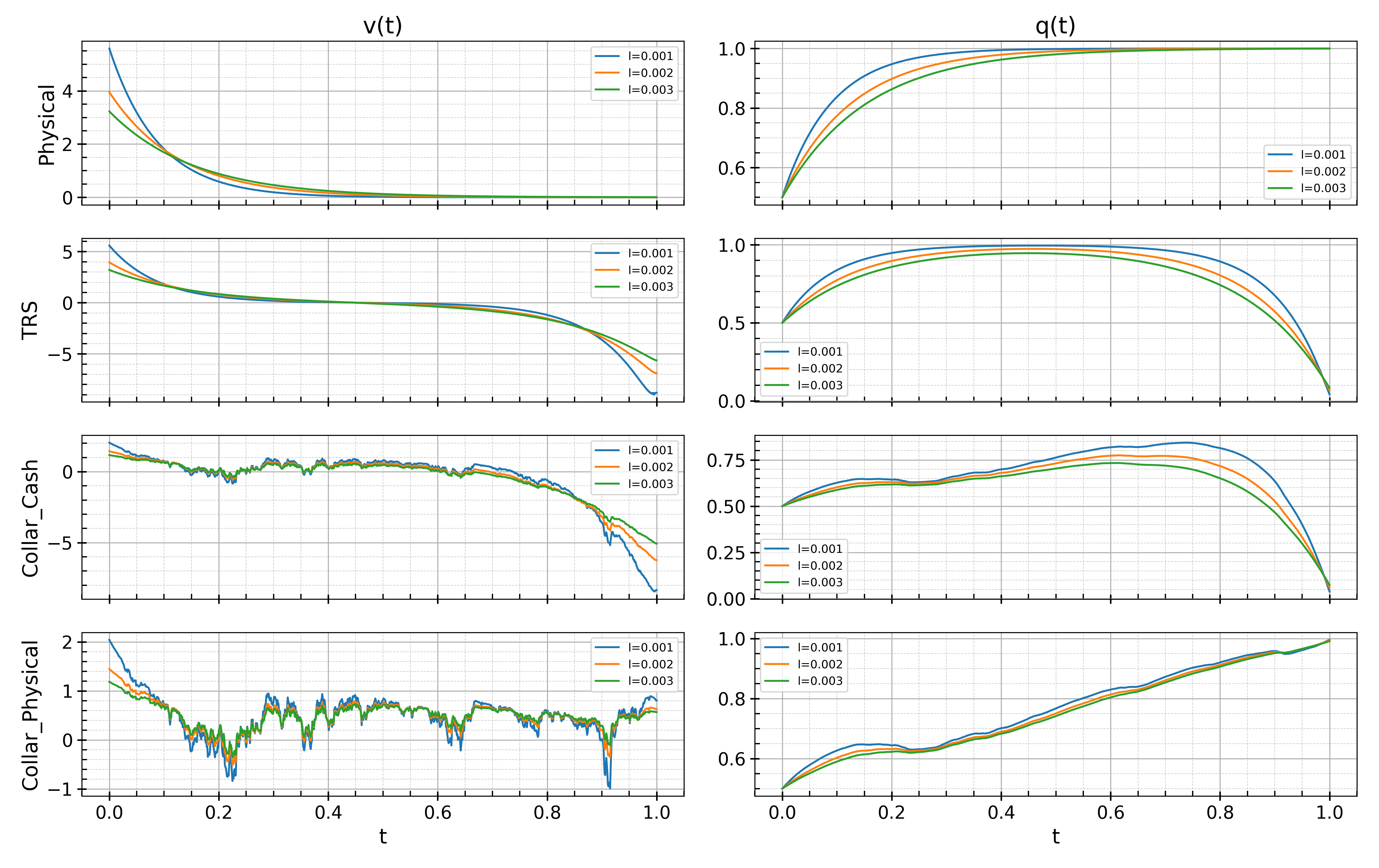}
    \caption{Simulations for different values of $l$.}
    \label{fig:sweep_l}
\end{figure}

\begin{figure}[h]
    \centering
    \includegraphics[width=0.9\linewidth]{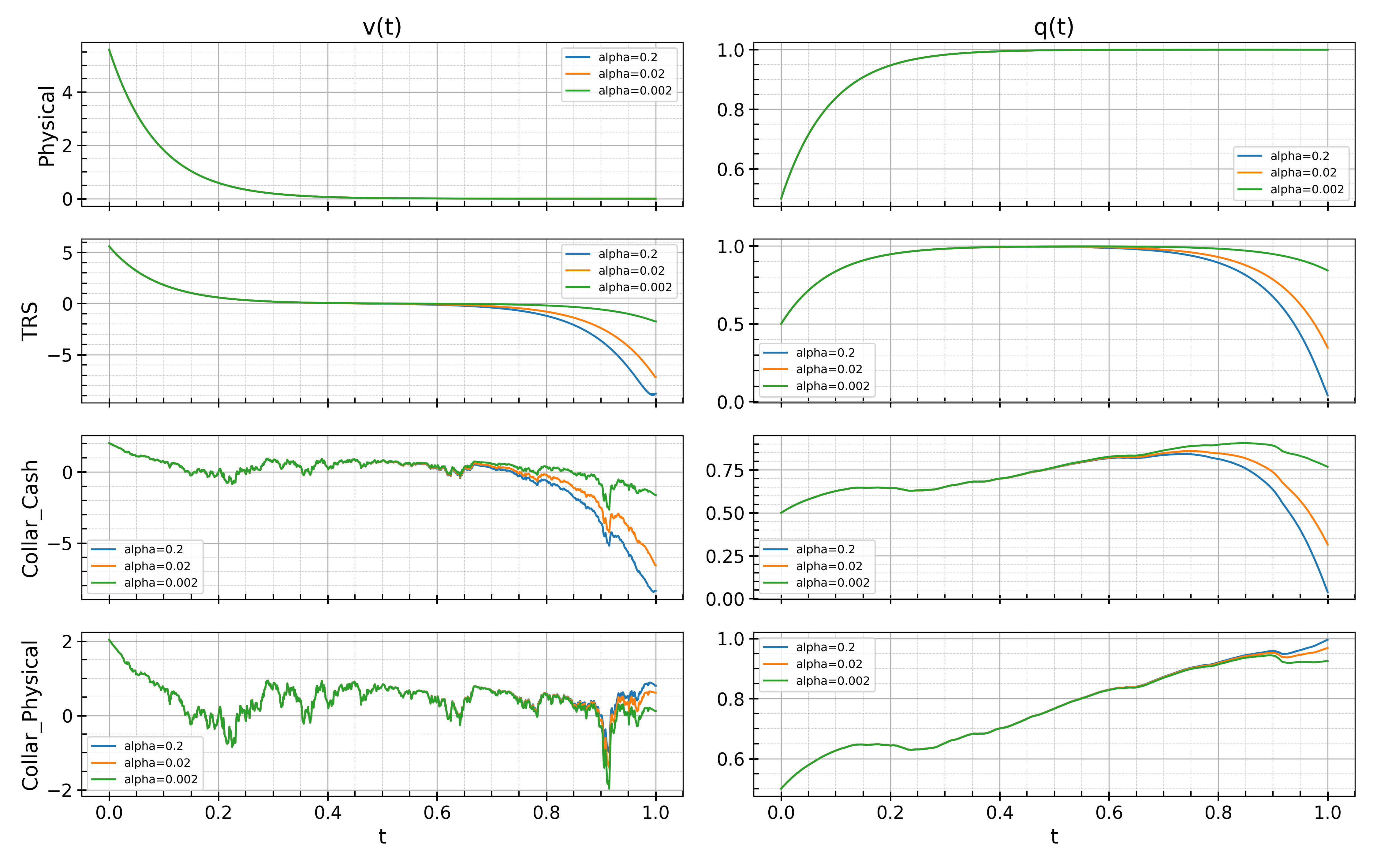}
    \caption{Simulations for different values of $\alpha$.}
    \label{fig:sweep_alpha}
\end{figure}

\begin{figure}[h]
   \centering
  \includegraphics[width=0.9\linewidth]{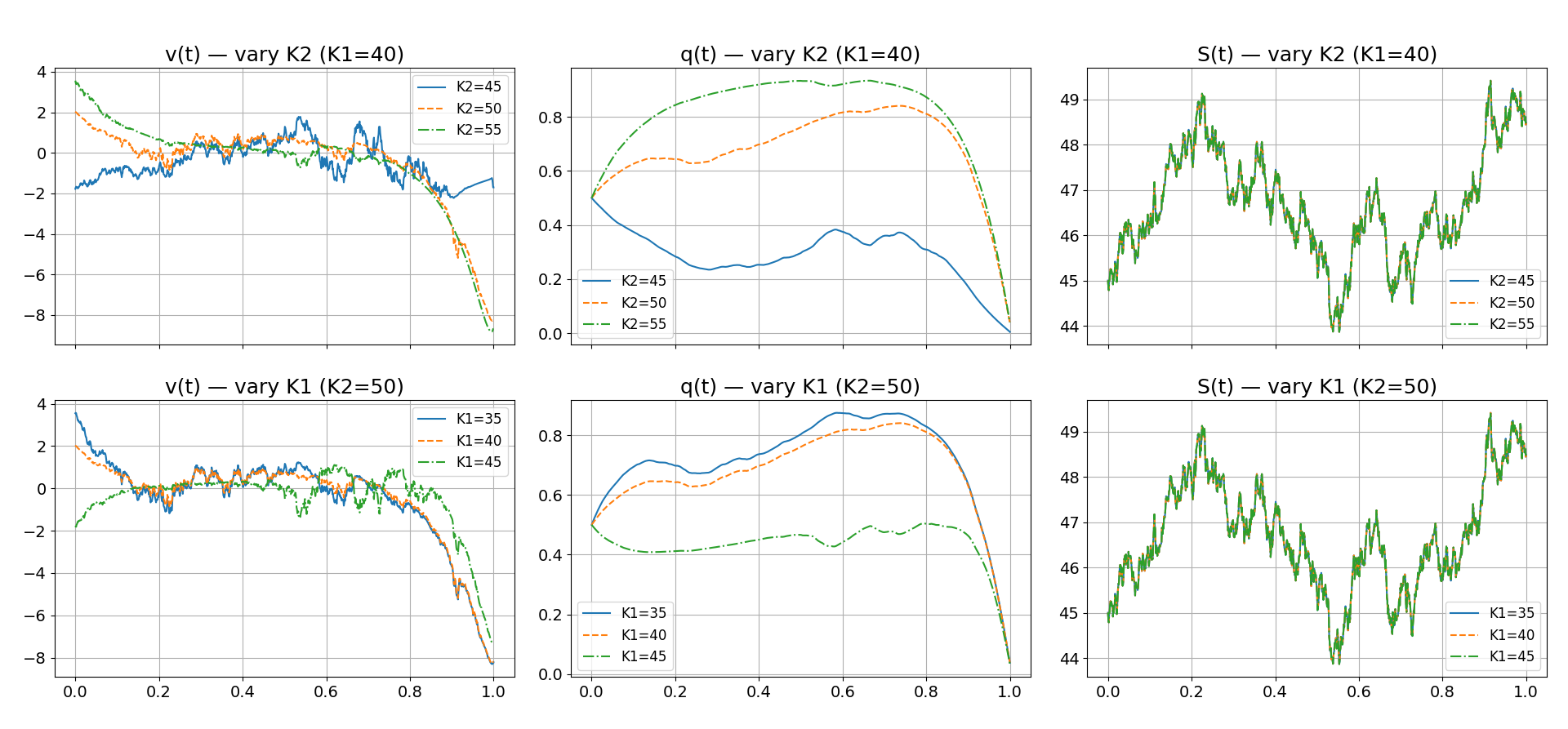}
   \caption{Simulations for different value of $K_1$ and $K_2$ for collar (cash).}
\label{Q2}
\end{figure}

\begin{figure}[h]
   \centering
  \includegraphics[width=0.9\linewidth]{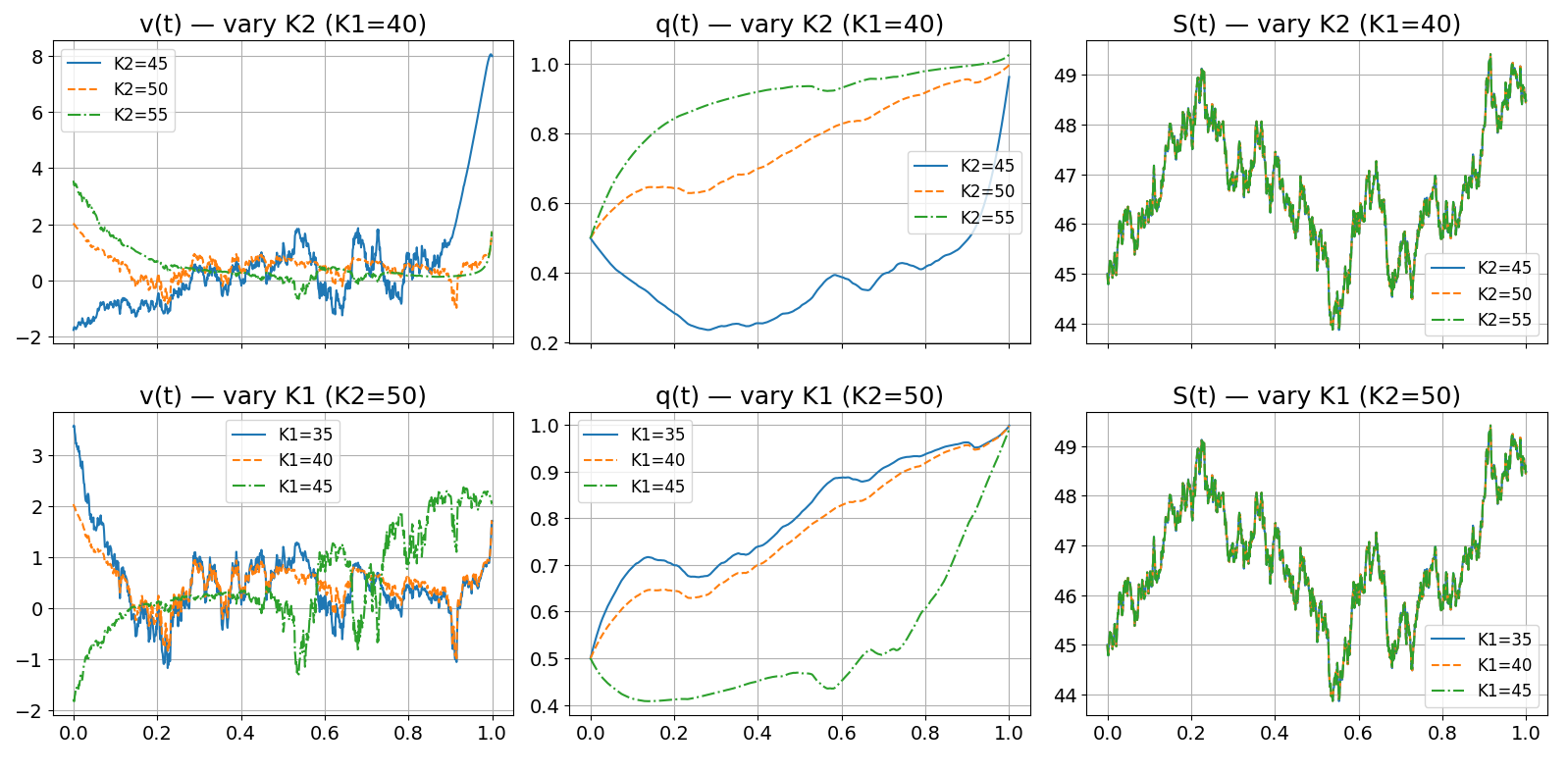}
   \caption{Simulations for different value of $K_1$ and $K_2$ for collar (physical).}
   \label{Q3}
\end{figure}

\end{document}